\documentclass[acmtog]{acmart}
%\acmSubmissionID{2154}

\usepackage{amsthm}
\usepackage{amsmath}
\usepackage{enumitem}
\usepackage{xcolor}
\usepackage{wrapfig}

\theoremstyle{definition}
\newtheorem{definition}{Definition}[section] 
\theoremstyle{definition}

\theoremstyle{definition}
\newtheorem{prerequisite}{Prerequisite}[section]
\theoremstyle{definition}
\newtheorem{lemma}{Lemma}[section]
\theoremstyle{definition}

\citestyle{acmauthoryear}

\usepackage[ruled]{algorithm2e} % For algorithms

\SetAlFnt{\small}
\SetAlCapFnt{\small}
\SetAlCapNameFnt{\small}
\SetAlCapHSkip{0pt}

% Metadata Information
\setcopyright{acmlicensed}
\copyrightyear{2025}
\acmYear{2025}
\acmDOI{10.1145/3763355}
\acmJournal{TOG}

%%%%%%%%%%%%%%%%%% After Receiving %%%%%%%%%%%%%%%%%
%%%%%%%%%%%%%%%%%% Copyright %%%%%%%%%%%%%%%%%%%%%%
%\setcopyright{cc}
%\setcctype{by}
%\acmJournal{TOG}
%\acmYear{2025} \acmVolume{44} \acmNumber{6} \acmArticle{1} \acmMonth{12} \acmPrice{}\acmDOI{10.1145/3763355}
%%%%%%%%%%%%%%%%%%%%%%%%%%%%%%%%%%%

\begin{document}
\title{Can Any Model Be Fabricated? Inverse Operation Based Planning for Hybrid Additive–Subtractive Manufacturing}

%%%%%%%%%%%%%
%！！！ Each author must have a valid e-mail address.
%！！！ Each affiliation must have a valid country. (City and state / province is nice, too.)
%%%%%%%%%%%%%

% DO NOT ENTER AUTHOR INFORMATION FOR ANONYMOUS TECHNICAL PAPER SUBMISSIONS TO SIGGRAPH 2019!
\author{Yongxue Chen}
\authornotemark[1]
\orcid{0000-0001-6236-4158}
\affiliation{%
 \institution{The University of Manchester}
 \streetaddress{Oxford Rd}
 \city{Manchester}
 \postcode{M13 9PL}
 \country{United Kingdom}}

\author{Tao Liu}
\authornote{Joint first authors.}
\orcid{0000-0003-1016-4191}
\affiliation{%
 \institution{The University of Manchester}
 \streetaddress{Oxford Rd}
 \city{Manchester}
 \postcode{M13 9PL}
 \country{United Kingdom}}

\author{Yuming Huang}
\orcid{0000-0001-5900-2164}
\affiliation{%
 \institution{The University of Manchester}
 \streetaddress{Oxford Rd}
 \city{Manchester}
 \postcode{M13 9PL}
 \country{United Kingdom}}

\author{Weiming Wang}
\orcid{0000-0001-6289-0094}
\affiliation{%
 \institution{The University of Manchester}
 \streetaddress{Oxford Rd}
 \city{Manchester}
 \postcode{M13 9PL}
 \country{United Kingdom}}

\author{Tianyu Zhang}
\orcid{0000-0003-0372-0049}
\affiliation{%
 \institution{The University of Manchester}
 \streetaddress{Oxford Rd}
 \city{Manchester}
 \postcode{M13 9PL}
 \country{United Kingdom}}

\author{Kun Qian}
\orcid{0000-0002-8719-1537}
\affiliation{%
 \institution{The University of Manchester}
 \streetaddress{Oxford Rd}
 \city{Manchester}
 \postcode{M13 9PL}
 \country{United Kingdom}}

\author{Zikang Shi}
\orcid{0000-0002-2420-7722}
\affiliation{%
 \institution{The University of Manchester}
 \streetaddress{Oxford Rd}
 \city{Manchester}
 \postcode{M13 9PL}
 \country{United Kingdom}}

\author{Charlie C.L. Wang}
\orcid{0000-0003-4406-8480}
\authornote {Corresponding author: charlie.wang@machester.ac.uk (Charlie C.L. Wang).}
\affiliation{%
 \institution{The University of Manchester}
 \streetaddress{Oxford Rd}
 \city{Manchester}
 \postcode{M13 9PL}
 \country{United Kingdom}}
\email{charlie.wang@manchester.ac.uk}

\begin{abstract}
This paper presents a method for computing interleaved additive and subtractive manufacturing operations to fabricate models of arbitrary shapes. We solve the manufacturing planning problem by searching a sequence of inverse operations that progressively transform a target model into a null shape. Each inverse operation corresponds to either an additive or a subtractive step, ensuring both manufacturability and structural stability of intermediate shapes throughout the process. We theoretically prove that any model can be fabricated exactly using a sequence generated by our approach. To demonstrate the effectiveness of this method, we adopt a voxel-based implementation and develop a scalable algorithm that works on models represented by a large number of voxels. Our approach has been tested across a range of digital models and further validated through physical fabrication on a hybrid manufacturing system with automatic tool switching.
\end{abstract}

\begin{teaserfigure}
\centering
\includegraphics[width= \textwidth]{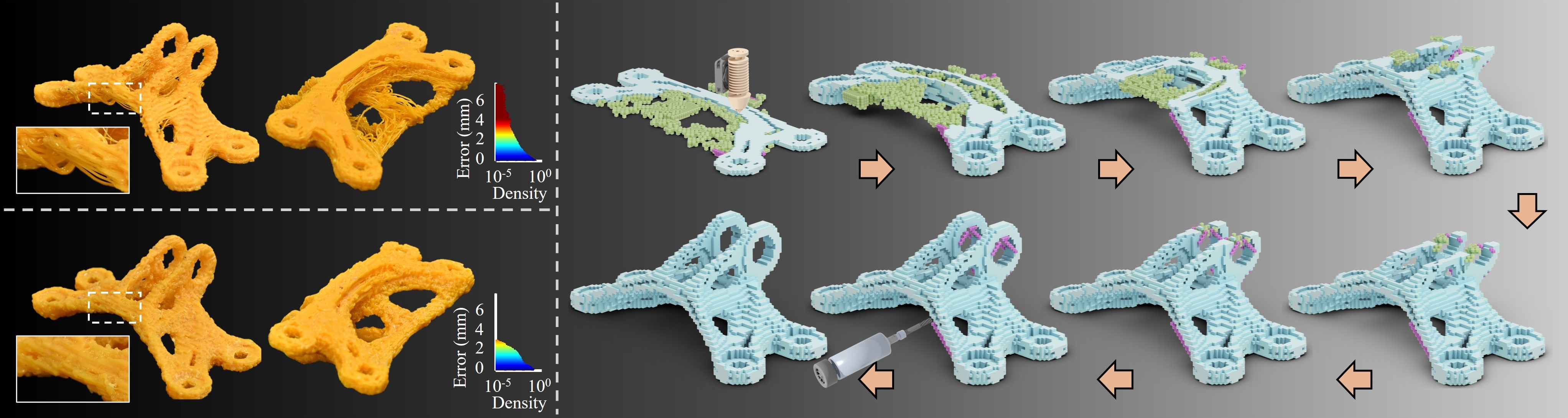}%\vspace{-10pt}
% \put(-507,126){\footnotesize \color{white}(a) Printing Only}
% \put(-507,59){\footnotesize \color{white}(c) Hybrid Manu.}
% \put(-335,126){\footnotesize \color{white}(b) Sequence}
% \put(-375,7){\footnotesize \color{white}Underside}
% \put(-375,72){\footnotesize \color{white}Underside}
\put(-507,128){\footnotesize \color{white}(a) Printing Only}
\put(-507,60){\footnotesize \color{white}(c) Hybrid Manu.}
\put(-326,128){\footnotesize \color{white}(b) Sequence}
\put(-420,54){\footnotesize \color{white}Underside}
\put(-420,124){\footnotesize \color{white}Underside}
\caption{Existing manufacturing approaches often fail to fabricate complex geometries using a single modality, as illustrated by the unsuccessful 3D printing of the GE-Bracket model (a). We propose a process planner for hybrid additive-subtractive manufacturing that enables the successful fabrication of arbitrarily complex shapes. The sequence of additive and subtractive operations determined by our planner is visualized in (b), where temporary supports generated \& removed during the hybrid process are shown as green voxels. The supports pre-added in the preparation stage are highlighted in purple, which can be removed at the last by subtractive operations.
}\label{fig:teaser}
%\vspace{5pt}
\end{teaserfigure}

\begin{CCSXML}
<ccs2012>
   <concept>
       <concept_id>10010147.10010371.10010396</concept_id>
       <concept_desc>Computing methodologies~Shape modeling</concept_desc>
       <concept_significance>500</concept_significance>
       </concept>
   <concept>
       <concept_id>10010147.10010371.10010387</concept_id>
       <concept_desc>Computing methodologies~Graphics systems and interfaces</concept_desc>
       <concept_significance>100</concept_significance>
       </concept>
 </ccs2012>
\end{CCSXML}

\ccsdesc[500]{Computing methodologies~Shape modeling}
\ccsdesc[500]{Applied computing~Engineering}

\keywords{Inverse Operation, Scalability, Process Planning, Hybrid Manufacturing}

\maketitle
\section{Introduction}
Additive manufacturing (AM), or 3D printing, offers unmatched flexibility in fabricating complex geometries but still cannot fully avoid support structures for large overhangs, even with multi-axis printing \cite{Dai2018SIG,Zhang2022SIGAsia,Liu2024Sig}. Subtractive manufacturing (SM), such as CNC milling, delivers speed and precision but is limited by tool accessibility \cite{MahdaviAmiri2020,CHEN2023MSE,CHEN2024RCIM}. Recent advances in hybrid manufacturing (HM) systems that combine AM and SM integrate the geometric freedom of AM with the accuracy of SM \cite{RABALO2023}, enabling supports to be printed and then effectively removed, thereby expanding the range of manufacturable shapes.

Despite this promise, current HM planning methods remain limited. Many approaches utilize SM solely for surface finishing, which does not address the challenge of realizing shapes that cannot be 3D printed without support (e.g., \cite{Zhao2023}). Other methods (e.g., \cite{PARC2022}) yield results with inevitable under-cuts or over-cuts, making it impossible to guarantee that the final part exactly matches the input geometry. In this paper, we address the fundamental question of feasibility and completeness in hybrid manufacturing: Can any model be fabricated? A computational framework is presented to guarantee the exact reproduction of arbitrary target geometries by generating a sequence of AM \& SM operations with automatic switching  (see Fig.~\ref{fig:teaser}).

The capability to automatically fabricate arbitrary shapes is particularly important for structural forms obtained from topology optimization (TO), since incorporating manufacturing constraints into TO inevitably sacrifices mechanical performance \cite{Langelaar2016AM}. As illustrated in Fig.~\ref{fig:MBBComp}(b), enforcing a self-support constraint during TO leads to structures with higher strain energy, as evident in the color maps. In contrast, our approach enables the automatic fabrication of TO results without imposing such constraints (Fig.~\ref{fig:MBBComp}(a)). It is worth noting that manually adding supports to the unconstrained TO result in Fig.~\ref{fig:MBBComp}(a) may not resolve the issue, as many of these supports cannot be removed by SM tools after printing due to tool-accessibility limitations \cite{zhong2025deepmill,DING2023CAD}.

\begin{figure}
    \centering
    \includegraphics[width=\linewidth]{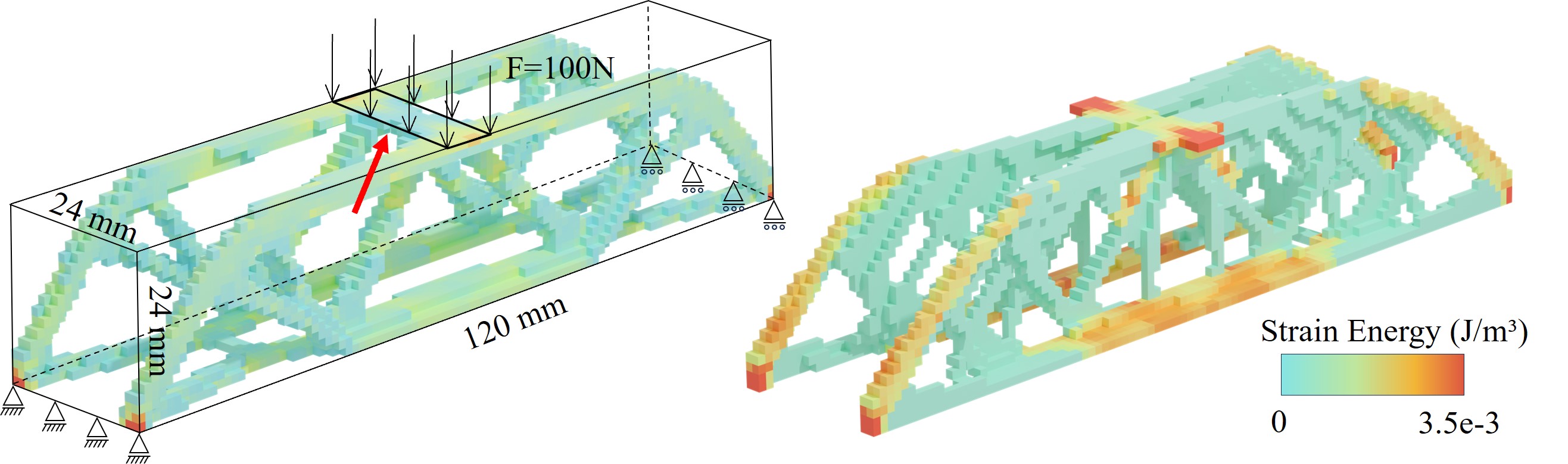}
    \put(-245,1){\footnotesize \color{black}(a)}
    \put(-125,1){\footnotesize \color{black}(b)}
    \put(-246,70){\footnotesize \color{black}\textbf{w/o} Self-support Constraint}
    \put(-246,60){\footnotesize \color{black}Compliance:~$0.37~\mathrm{J}$}
    \put(-117,70){\footnotesize \color{black}\textbf{w/i} Self-support Constraint}
    \put(-117,60){\footnotesize \color{black}Compliance:~$0.44~\mathrm{J}$}
\caption{
Topology optimization of the Messerschmitt-Bölkow-Blohm (MBB) beam (voxel resolution $100 \times 20 \times 20$, volume fraction 10\%) using the method of \cite{Langelaar2016AM}: (a) without self-support constraints and (b) with self-support constraints. Color maps show the strain energy distribution, and the corresponding compliance values used as the TO objective are also reported. The regions highlighted by red arrows in (a) would require AM supports that cannot be removed by SM tools after printing the entire structure -- necessitating a sequence with alternating AM / SM operations.
}\label{fig:MBBComp}
\end{figure}

\subsection{Problem and Challenge}\label{subsec:Challenge}
Given an input solid model $\mathcal{H}$, represented as a set of points in $\mathbb{R}^3$, the planning goal of hybrid manufacturing is to compute a sequence of AM and SM operations that incrementally transform the initial empty model $\mathcal{M}_0 = \emptyset$ into a final model $\mathcal{M}_m$ exactly matching $\mathcal{H}$. This process produces a sequence of intermediate shapes $\mathcal{M}_{t=1,\ldots,m}$, where each $\mathcal{M}_t$ must be self-standing (i.e., without floating or disconnected regions), a condition we refer to as \textit{stable} throughout this paper.
\begin{itemize}
\item When an AM operation is applied to $\mathcal{M}_i$, it produces a new shape $\mathcal{M}_{i+1}$ such that $\mathcal{M}_i \subset \mathcal{M}_{i+1}$. The added material must lie on or above the top layer of $\mathcal{M}_i$ and satisfy self-supporting constraints \cite{Vanek2014CleverSupport}. 

\item When an SM operation is applied to $\mathcal{M}i$, it produces a new shape $\mathcal{M}_{i+1} \subset \mathcal{M}_i$, typically by removing material from the boundary of $\mathcal{M}_i$ and should satisfy the tool accessibility requirement \cite{zhong2025deepmill,Harabin2023Accessibility}. 
\end{itemize} 
The challenge lies in searching for a feasible sequence of AM and SM operations, which collectively form a long planning chain governed by local feasibility and geometric constraints. AM steps are typically guided by the difference between the current shape $\mathcal{M}_i$ and the target model $\mathcal{H}$, while SM steps are employed to remove temporary support structures after they have served their purpose in supporting overhanging regions. However, forward search strategies can easily fall into topological dead-ends -- for instance, when a region that needs to be printed requires support that was not previously added (see Fig.~\ref{fig:whyInverseBetter}(a) for an illustration). This issue poses a major obstacle to the completeness of planning and is a central challenge addressed in our work.

\begin{figure}
\centering
\includegraphics[width=.9\linewidth]{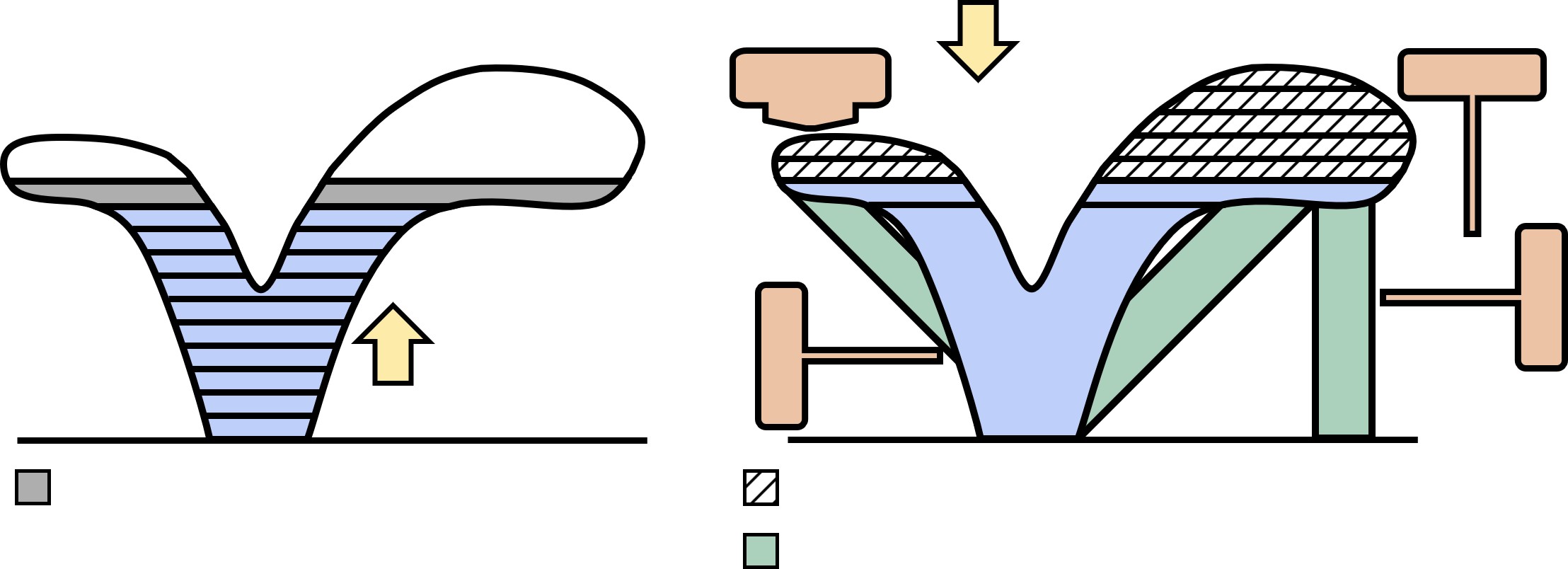}
\put(-225,74){\footnotesize \color{black}(a)}
\put(-127,74){\footnotesize \color{black}(b)}
% \put(-210,65){\scriptsize \color{red}Support cannot be added}
% \put(-210,59){\scriptsize \color{red}when printing to this layer}
\put(-209,9){\scriptsize \color{red}Support cannot be added}
\put(-209,1){\scriptsize \color{red}when printing to this layer}
\put(-143,40){\large \color{red}?}
\put(-215,40){\large \color{red}?}
\put(-157,28){\scriptsize \color{black}Forward search}
\put(-74,74){\scriptsize \color{black}Inverse search}
\put(-107,0){\scriptsize \color{black}Machinable supports `added' by \textit{Accretion}}
%\put(-67,4){\scriptsize \color{black}Operators}
\put(-107,9){\scriptsize \color{black}Printable regions `removed' by \textit{Erosion}}
\put(-115,75){\scriptsize \color{black}AM tool}
\put(-23,75){\scriptsize \color{black}SM tool}
%\put(-45,70){\scriptsize \color{black}}
%\put(-35,69){\scriptsize \color{black}Operators}
\caption{This example illustrates why planning a sequence of inverse AM and SM operations is advantageous: (a) forward search encounters a topological dead-end, making it impossible to complete the full shape; (b) inverse search enables the generation of temporary, removable supporting structures (highlighted in green), allowing exact fabrication of the input model.%\charlie{more discussion is needed -- some problem found (17/05)} \yongxue{updated}
}\label{fig:whyInverseBetter} 
\end{figure}

\subsection{Our Method and Contribution}\label{subsec:OurMethod}
To overcome the topological obstacles that often arise in forward search, we propose a novel strategy that performs the planning process in an inverse direction. Specifically, we define the inverse of a feasible SM operation as an \textit{accretion} operation and the inverse of a feasible AM operation as an \textit{erosion} operation. The searching objective is to compute a nullification sequence -- an ordered list of accretion and erosion operations that progressively reduce the input model to the empty model $\mathcal{M}_0 = \emptyset$. This inverse formulation allows us to use the accretion operation to `add' removable supports beneath regions that would otherwise require support -- i.e., regions with large overhang. As newly added structures may themselves be overhanging, the accretion operations are applied iteratively until the added structure becomes stable. These supports will be `removed' later during the inverse search by erosion operations (see Fig.~\ref{fig:whyInverseBetter}(b) for an illustration). 

By introducing a rule that restricts erosion to the topmost layer and applies accretion only below it, we theoretically prove that any model can be nullified through such a sequence. This ensures the completeness of the corresponding forward sequence of AM and SM operations for fabricating arbitrary geometries. While voxel-based representations have been widely adopted in advanced product design (e.g., \cite{Doubrovski2015CAD,SigmundNature}) and manufacturing process planning (e.g., \cite{Dai2018SIG}), we employ voxels in this work for their simplicity and ease of implementation. This choice supports effective computation and enables practical validation of our approach. Moreover, we develop a scalable algorithm to compute the sequence while aiming to reduce the number of operations, thereby shortening the overall manufacturing time. 

Our work introduces the following technical contributions. 
\begin{itemize}
\item We propose a novel method to compute feasible sequences of hybrid additive-subtractive manufacturing by inverting operations from the target model to null, while ensuring manufacturability at every step.

\item We prove the completeness of our method: any model can be fabricated exactly -- without excess deposition or machining -- through a valid sequence of additive and subtractive operations generated by our algorithm. 

\item We develop a localized stability check algorithm and a heuristic pre-processing strategy that, respectively, improve the scalability of the method and reduce redundant operations. 
\end{itemize}
To the best of our knowledge, this is the first approach that theoretically guarantees the feasibility of hybrid manufacturing for arbitrary shapes. We demonstrate the effectiveness of our method through numerical simulations and physical experiments across a variety of models. The source code of our implementation is available at: \url{https://github.com/Yongxue-Chen/hybManuAccEro}. 

\subsection{Related Work}\label{subsec:RelatedWork}
This section reviews the most relevant research methods in the literature. We focus on prior work that closely relates to our approach, rather than providing an exhaustive survey.

\subsubsection{Voxel-based representation for manufacturing}

Due to its simplicity, voxel-based representation has been widely used in computer graphics for various applications~\cite{Coeurjolly2018Sig}. With the advancement of 3D printing technologies, voxels have also gained popularity in the AM community, particularly for multi-material~\cite{Bader2018sciadv,Doubrovski2015CAD} and multi-axis 3D printing~\cite{Dai2018SIG}. In these applications, voxel grids are used to encode material properties and temporal information for manufacturing sequences. Similarly, voxel-based representations have been adopted in five-axis CNC milling to compute optimized toolpaths and machining sequences (e.g.,~\cite{Tang2020JCISE}), due to their ease of implementation for accessibility analysis and feature size estimation.

More recently, voxels have been applied in hybrid manufacturing research~\cite{PARC2022,Nishiyama2023}, where they facilitate the representation of time-varying model states during the manufacturing process. The approach proposed in this paper also benefits from the simplicity and flexibility of voxel-based modeling to represent intermediate manufacturing states. Different from those researches for toolpath generation (e.g.,~\cite{Zhao2018,Barton2021,Chen2025Tase}), we focus more on the manufacturing sequence planning in this work.

\subsubsection{Process planning for hybrid manufacturing}
Similar to the widely adopted strategy in traditional manufacturing process planning~\cite{Han2000}, feature-based approaches have also been applied to hybrid process planning in earlier research (e.g.,~\cite{HE2023,Liu2020JMS,Parc2018}). However, the completeness of these methods heavily depends on the success of manufacturing feature recognition -- a task that remains challenging even when applying the modern technology of machine intelligence~\cite{ZHANG2024CAGD} or the shape decomposition for manufacturing~\cite{Hu2014Tog}. Furthermore, these approaches are generally limited to models with relatively simple and regular geometries, and are not well-suited for complex freeform surfaces, such as those generated through topology optimization~\cite{Sigmund2001} (see also Fig.~\ref{fig:teaser} for an example). In contrast, our approach is general and does not impose limitations on the geometric complexity of the input models.

In the literature, many existing works focus on using AM operations to construct the basic geometry, followed by SM operations to enhance dimensional accuracy and surface finish. Decomposition-based methods are commonly employed for such hybrid AM-SM strategies, typically aiming to minimize the number of transitions between AM and SM processes. For example, decomposition techniques was proposed in ~\cite{Tang2018CG, Tang2020CAD, Tang2025} for generating collision-free hybrid manufacturing plans, with a particular focus on managing integrated workflows for components with columnar geometries. Zhong et al.~\shortcite{Zhao2023} introduced a decomposition algorithm based on dynamic directed graphs and beam search, which was validated on a 5-axis hybrid manufacturing platform and demonstrated effective sequencing for a variety of complex 3D shapes. While minimizing transitions between AM and SM operations is indeed important, these methods fall short of addressing a more fundamental challenge: how to guarantee the manufacturability of arbitrary geometries using hybrid AM-SM processes. 

To address this, the authors of~\cite{PARC2022} introduced a tree-based planner that incorporates both support-free (for AM operations) and accessibility constraints (for both AM and SM) into the planning loop. However, as discussed earlier, the forward search strategy employed in their approach can easily lead to intermediate shapes with topological dead-ends. Our proposed method overcomes this limitation through an inverse search strategy, enabling the generation of feasible AM-SM process plans even for complex geometries.

\subsubsection{Manufacturing constraints for design} 
Shape and topology optimization has been widely used to address different engineering problems \cite{ShadowArt,Li2023Tog} across various scales~\cite{Zhu2017TwoScale, SigmundNature}. However, the complex geometries generated by these methods often pose significant manufacturing challenges -- particularly for single-material AM -- unless techniques involving soluble support materials in polymer-based AM are employed. To address this issue, researchers have developed methods that incorporate manufacturing constraints directly into the topology optimization process. These include constraints for self-supporting structures in AM~\cite{Langelaar2016AM}, tool accessibility in SM~\cite{Lee2022CMAME}, and multiple constraints for multi-axis hybrid manufacturing~\cite{Liu2024CMAME,Mirzendehdel2022CAD}. The inclusion of such constraints often leads to trade-offs, compromising the performance of the optimized physical properties. In contrast, our proposed hybrid manufacturing approach enables the fabrication of arbitrary geometries derived from topology optimization with isotropic materials -- without sacrificing design performance. 

\section{Voxel-Based Formulation}\label{sec:VoxelFormulation}
This section will introduce the basic formulation as preparation for computing the feasible additive-subtractive manufacturing sequences for models represented as sets of voxels.

\subsection{States and Stability}\label{subsec:StateFunc}
A model to be fabricated is represented as a set of voxels discretized in the computational domain $\Omega$ on a regular voxel grid with resolution $(n_x, n_y, n_z)$. Each voxel is indexed as $v_{i,j,k}$, where $(i, j, k)$ denotes its position in the grid. The current state of the model during fabrication is described by a binary function $M(\cdot)$, referred to as the \textit{state function}, which maps each voxel to its occupancy status. For any voxel $v_{i,j,k}$, the state function is defined as:
\begin{equation}
M(v_{i,j,k})=
\left\{\begin{array}{ll}
    1 & \text{if $v_{i,j,k}$ is solid,} \\
    0 & \text{if $v_{i,j,k}$ is empty.}
\end{array}
\right. 
\end{equation}
We denote the target model state as $M_{\mathcal{H}}$, which exactly matches the voxelized representation of the input model $\mathcal{H}$. Similarly, the \textit{null} state is denoted by $M_{\text{null}}$, where $M_{\text{null}} \equiv 0$ for all voxels. The volume of a model with state $M(\cdot)$ is computed as the total number of solid voxels, i.e., $\sum_{v_{i,j,k}} M(v_{i,j,k})$.

The \textit{stability} of a model under fabrication is defined below using the concept of voxel connectivity.

\begin{definition}
Two solid voxels are considered \textit{connected} if they satisfy the following conditions:
\begin{enumerate}
    \item They are either face-neighbors or edge-neighbors in the voxel grid;
    \item Connectivity is transitive -- i.e., if two voxels are each connected to a common voxel, they are considered connected to each other, 
\end{enumerate}
\end{definition}
\noindent We restrict connectivity to face and edge neighbors because corner-based (vertex-only) contacts provide insufficient material adhesion to ensure structural stability during fabrication. The set of face and edge neighbors of voxel $v_{i,j,k}$ is denoted as $\mathcal{N}(v_{i,j,k})$.

\begin{definition}
A model is said to be \textit{stable} if every solid voxel is connected (in the above sense) to at least one solid voxel located at the base layer.
\end{definition}
\noindent The most straightforward way to evaluate the stability of a model is to use a flood-fill (region-growing) algorithm that checks whether all solid voxels are reachable from the base layer via valid connections. However, this approach is time-consuming. To address this, we develop a conservative yet local method in Sec.~\ref{subsec:LocalStabilityCheck} to enable efficient stability checking.

\subsection{Tools and Manufacturing Operations}\label{subsec:ManuOperation}
We introduce the AM and SM tools below along with the permitted orientations of their operations, which serve as the prerequisites for defining operation feasibility. 

\begin{prerequisite}
\label{prere:AMTool}
The AM tool is restricted to a vertically downward orientation, allowing material deposition on a single voxel without interfering with other voxels at the same height.
\end{prerequisite}

\noindent Given the relatively flat shape of commonly used 3D printing heads, we define the occupancy set of the AM tool as follows.
\begin{definition}\label{def:AMTool}
When depositing material at voxel $v_{i,j,k}$, the occupancy set of the AM tool is defined as
\begin{center}
    $\mathcal{T}_A(v_{i,j,k})=\{ v_{\alpha,\beta,\gamma} \, | \, \gamma > k\}$.
\end{center}
%which represents all voxels located vertically above $v_{i,j,k}$.
\end{definition}

Differently, more flexibility is introduced for SM operations.

\begin{prerequisite}
\label{prere:SMTool}
The SM tool is permitted to operate along five orientations: the negative $z$-axis, and the $\pm x$ and $\pm y$ axes. The tool can remove material from voxels aligned along its tool axis, penetrating into the model up to a specified depth determined by the tool length $\bar{L}$.
\end{prerequisite}

\noindent Similarly, the occupancy set of an SM tool can be defined according to the tool orientation and the tool length.

\begin{definition}\label{def:SMTool}
When removing material at voxel $v_{i,j,k}$, the occupancy set of the SM tool is defined as one of these five sets:
\begin{enumerate}
\item $\mathcal{T}^{-z}_S(v_{i,j,k})=\{ v_{\alpha,\beta,\gamma} \, | \, \gamma \geq k+\bar{L} \} \bigcup \{ v_{i,j,\gamma} \, | \, \gamma > k \}$;
\item $\mathcal{T}^{+x}_S(v_{i,j,k})=\{ v_{\alpha,\beta,\gamma} \, | \, \alpha \leq i-\bar{L} \} \bigcup \{ v_{\alpha,j,k} \, | \, \alpha < i \}$;
\item $\mathcal{T}^{-x}_S(v_{i,j,k})=\{ v_{\alpha,\beta,\gamma} \, | \, \alpha \geq i+\bar{L} \} \bigcup \{ v_{\alpha,j,k} \, | \, \alpha > i \}$;
\item $\mathcal{T}^{+y}_S(v_{i,j,k})=\{ v_{\alpha,\beta,\gamma} \, | \, \beta \leq j-\bar{L} \} \bigcup \{ v_{i,\beta,k} \, | \, \beta < j \}$;
\item $\mathcal{T}^{-y}_S(v_{i,j,k})=\{ v_{\alpha,\beta,\gamma} \, | \, \beta \geq j+\bar{L} \} \bigcup \{ v_{i,\beta,k} \, | \,  \beta > j \}$.
\end{enumerate}
\end{definition}

The orientations of tools and their corresponding occupancy sets have been illustrated in Fig.~\ref{fig:AM_SM_Tools}. 
\begin{figure}
\includegraphics[width=0.9\linewidth]{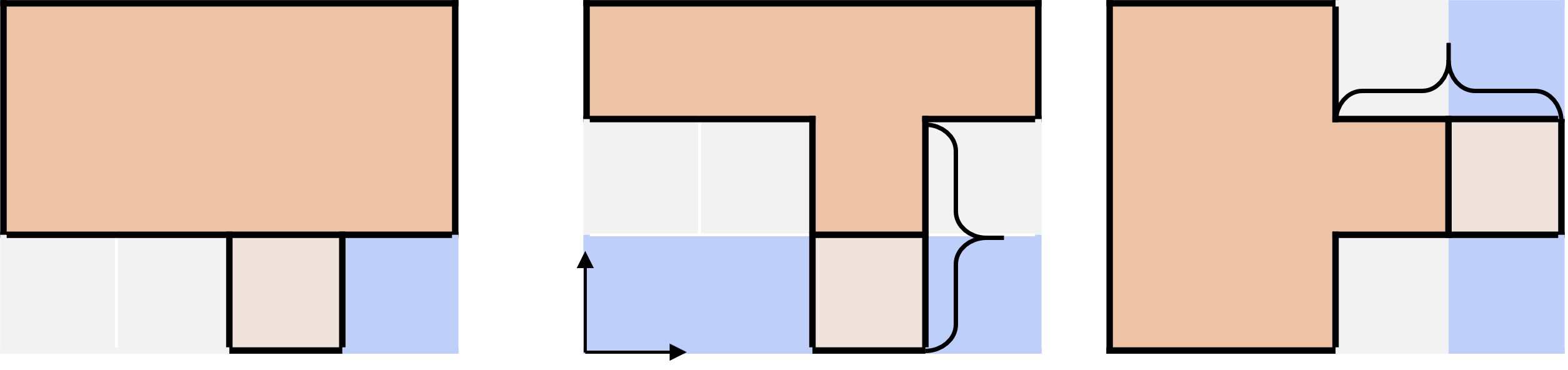}
\put(-229,3){\footnotesize \color{black}(a)}
\put(-148,3){\footnotesize \color{black}(b)}
%\put(-73,3){\footnotesize \color{black}(c)}
%\put(-187,5){\tiny \color{black}$\mathcal{T}_A(v_{i,j})$}
\put(-142,18){\footnotesize \color{black}$\mathbf{y}$}
\put(-123,5){\footnotesize \color{black}$\mathbf{x}$}
\put(-189,25){\footnotesize \color{black}$\mathcal{T}_A(v_{i,j})$}
\put(-183,7){\footnotesize \color{black}$v_{i,j}$}
\put(-110,42){\footnotesize \color{black}$\mathcal{T}^{-y}_S(v_{i,j})$}
\put(-102,7){\footnotesize \color{black}$v_{i,j}$}
\put(-50,25){\footnotesize \color{black}$\mathcal{T}^{+x}_S(v_{i,j})$}
\put(-14,25){\footnotesize \color{black}$v_{i,j}$}
\put(-19,45){\footnotesize \color{black}$\bar{L}$}
\put(-79,16){\footnotesize \color{black}$\bar{L}$}
\caption{2D illustrations of (a) an AM tool and (b) SM tools, shown with tool length $\bar{L}$ represented in voxel form.}
\label{fig:AM_SM_Tools}
\end{figure}
Note that the occupancy sets (i.e., tool shapes) defined above are intentionally conservative. However, the formulation and the associated algorithm are general and remain applicable to any tool shapes that are smaller or more compact than those assumed in this paper.

We now define the feasibility of AM and SM operations, where an AM operation converts an empty voxel into a solid one, while an SM operation does the inverse -- changing a solid voxel into an empty one.

\begin{definition}
An AM operation applied at voxel $v_{i,j,k}$ is \textit{feasible} if it is i) \textit{self-supported} -- i.e., any of its solid neighbors in $\mathcal{N}(v_{i,j,k})$ is located at the height $k-1$, and ii) \textit{collision-free} as $\{ v_{\alpha,\beta,\gamma} \, | \, M(v_{\alpha,\beta,\gamma}) = 1 \} \cap \mathcal{T}_A(v_{i,j,k}) = \emptyset$.
\end{definition}

\noindent AM operations applied to the first layer (i.e., $k=1$) are always self-supported. The self-supporting angle is usually determined by material properties, printing method, and other factors. In all our formulations, the supporting set is defined based on the supporting angle as $\pi/4$. 

\begin{definition}
An SM operation applied at voxel $v_{i,j,k}$ is \textit{feasible} if it is i) \textit{collision-free} as $\{ v_{\alpha,\beta,\gamma} \, | \, M(v_{\alpha,\beta,\gamma}) = 1\}  \cap \mathcal{T}_S(v_{i,j,k}) = \emptyset$, and ii) the resultant model is \textit{stable}.
\end{definition} 
\noindent Here $\mathcal{T}_S(v_{i,j,k})$ can be any occupancy set given in Def.~\ref{def:SMTool}.

\subsection{Problem of Process Planning}\label{subsec:ProbDef}
The problem of hybrid manufacturing process planning is to compute an ordered sequence of interleaved feasible AM and SM operations, using a voxel-based representation, such that the state function of the workspace transitions from the null state $M_{\text{null}}$ to the target state $M_{\mathcal{H}}$, where $M_{\mathcal{H}}$ is defined by the voxelization of the input model $\mathcal{H}$.
\section{Nullification for Process Planning}\label{sec:Nullification}
This section first introduces the operators that serve as the inverses of the AM and SM operations, followed by presenting the nullification algorithm for generating the manufacturing sequence. We then prove the completeness of our nullification algorithm.

\subsection{Accretion and Erosion Operators}\label{subsec:AccEroOperators}
Two inverse operators are introduced to support the nullification process.
\begin{itemize}
\item \textit{Erosion}: converts a solid voxel $v_{i,j,k}$ to empty, serving as the inverse of an AM operation;
\item \textit{Accretion}: converts an empty voxel $v_{i,j,k}$ to solid, serving as the inverse of an SM operation.
\end{itemize}
Each operator transitions the state function from $M_t(\cdot)$ to $M_{t-1}(\cdot)$. Their feasibility is defined as follows.

\begin{definition}\label{def:Ero}
An \textit{erosion} operator is feasible if the following conditions are met: 
\begin{enumerate}
\item The original AM operation at $v_{i,j,k}$ is feasible -- i.e., it satisfies both the \textit{self-support} and \textit{collision-free} conditions;
\item The resultant model represented by $M_{t-1}(\cdot)$ is \textit{stable}.
\end{enumerate}    
\end{definition}

\begin{definition}\label{def:Acc}
An \textit{accretion} operator is feasible when the following conditions are met:
\begin{enumerate}
\item The corresponding SM operation at $v_{i,j,k}$ is \textit{collision-free};
\item There exists at least one solid voxel in the neighborhood $\mathcal{N}(v_{i,j,k})$.
\end{enumerate}
\end{definition}

Without loss of generality, we assume the input model is \textit{stable}. Under this assumption, the above feasibility definitions ensure that each intermediate state $M_t(\cdot)$ during the nullification process also remains stable. Stability does not need to be explicitly verified for accretion operations because the presence of a neighboring solid voxel and the stability of $M_i(\cdot)$ together imply the stability of the resulting $M_{t-1}(\cdot)$. As discussed, stability checking is computationally expensive and should be avoided whenever possible for algorithmic scalability.

\subsection{Sequence Generation}\label{subsec:SequenceGen}
A feasible hybrid manufacturing sequence for the input model $M_{\mathcal{H}}$ is generated by progressively applying erosion and accretion operators, transforming $M_{\mathcal{H}}$ step-by-step into the null model $M_{\text{null}}$ (as illustrated in Fig.~\ref{fig:NullificationProcess}).

\begin{figure}
\centering
\includegraphics[width=\linewidth]{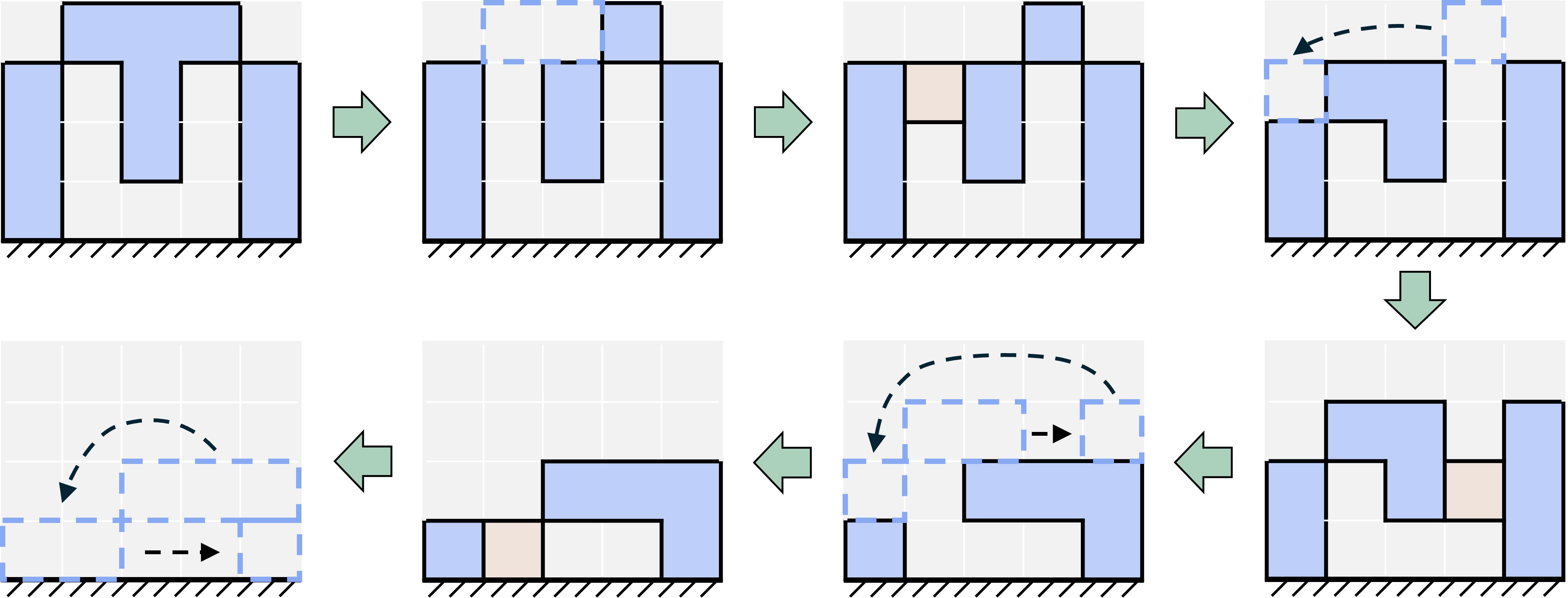}
\put(-192,28){\footnotesize \color{black}$Ero$}
\put(-128,28){\footnotesize \color{black}$Acc$}
\put(-62,28){\footnotesize \color{black}$Ero$}
\put(-15,44){\footnotesize \color{black}$Acc$}
\put(-192,80){\footnotesize \color{black}$Ero$}
\put(-128,80){\footnotesize \color{black}$Acc$}
\put(-62,80){\footnotesize \color{black}$Ero$}
\caption{Illustration of the nullification process in the 2D case, where the dashed blue boxes represent the results of applying erosion operations (denoted by $Ero$) with black dashed arrows giving the sequence of operations. Voxels `added' by the accretion operations (denoted by $Acc$) are displayed in light beige.
}\label{fig:NullificationProcess}
\end{figure}

\begin{figure}
\centering
\includegraphics[width=1.0\linewidth]{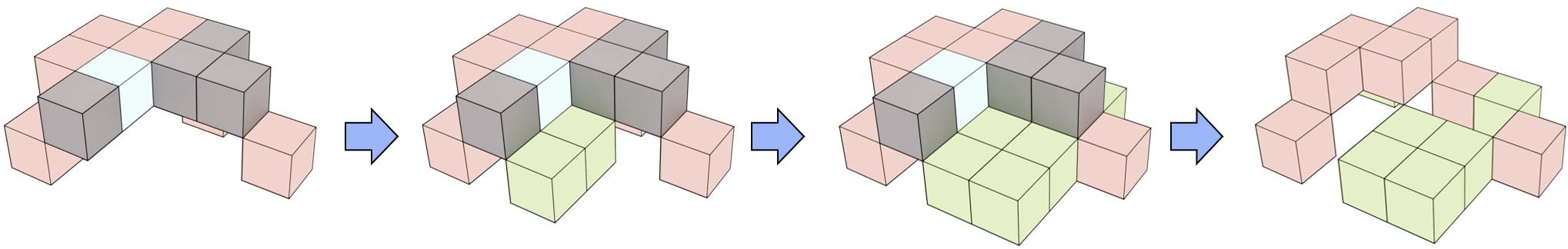}
\put(-224,12){\footnotesize \color{black}$\uparrow$}
\put(-225,6){\footnotesize \color{black}$v_{i,j,K}$}
\put(-192,25){\footnotesize \color{black}$Acc$}
\put(-128,25){\footnotesize \color{black}$Acc$}
\put(-62,25){\footnotesize \color{black}$Ero$}
\caption{For enhancing the erosion-feasibility of a voxel $v_{i,j,K}$ displaye in light blue color -- i.e., $v_{i\pm 1,j,K-1}$ and $v_{i,j\pm 1,K-1}$ are all empty, the accretion operators are applied ring by ring to the neighbors of $\mathcal{N}(v_{i,j,K})$ so that the voxel $v_{i,j,K}$ becomes erosion-feasible at the end of `growing'. The voxels `added' by accretion are displayed in green color, and the voxels in the erosion feasible set $\Lambda$ are displayed in dark grey. Note that all voxels in $\Lambda$ are considered as empty when checking erosion-feasibility; therefore, the second batch of accretion operations are needed in this case. After applying the feasible accretion operations, all voxels in $\Lambda$ are removed together by the erosion operations. %\tao{updated 17/08}
}\label{fig:ringByringAccretion}
\end{figure}

\begin{figure*}
\centering
\includegraphics[width=1\linewidth]{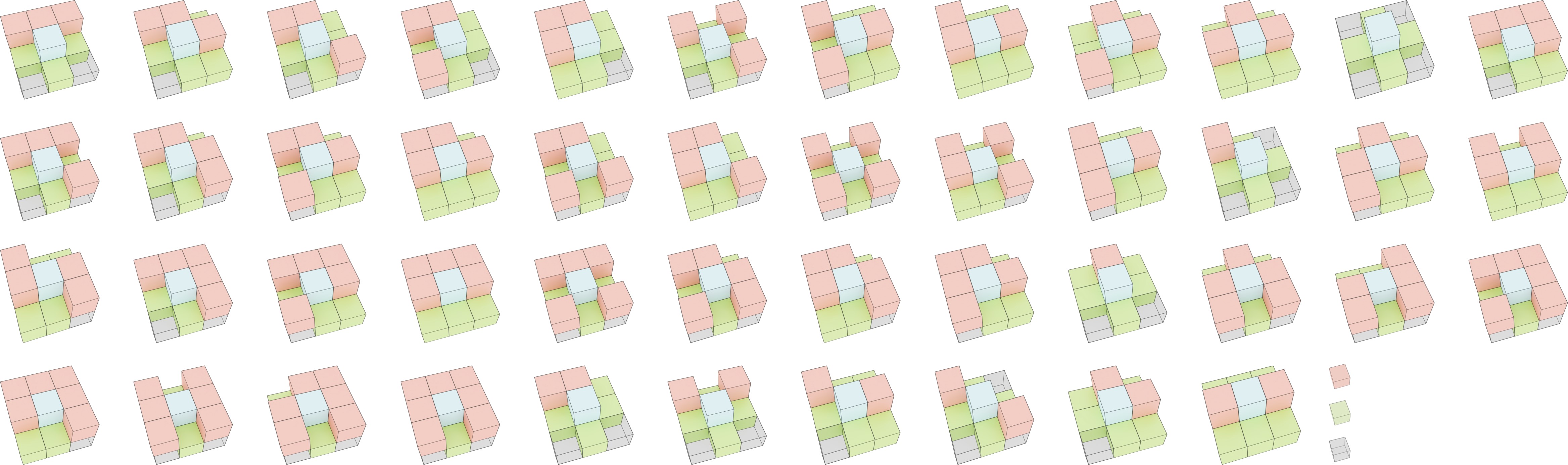}
\put(-69,27){\footnotesize \color{black} Existing solid neighbors}
\put(-69,15){\footnotesize \color{black} Newly `added' voxels}
\put(-69,3){\footnotesize \color{black} Non-contributing voxels}
\caption{All 46 configurations -- excluding symmetric cases -- of existing voxels (peach) around a boundary voxel $v_{i,j,k}$ (blue), and the voxels in green are added by feasible-accretion operations to make $v_{i,j,k}$ erosion-feasible. Voxels shown in gray are connected to $v_{i,j,k}$ only by a vertex and are not considered as neighbors providing the support to $v_{i,j,k}$; therefore, their solid or empty status does not affect the erosion-feasibility of $v_{i,j,k}$.
}\label{fig:completenessProof}
\end{figure*}

The order of operators is determined heuristically based on the following consideration:
\begin{itemize}
\item The erosion operator is only applied on the top layer of the current model $M_t(\cdot)$ because of the relatively flat shape of the printer head commonly used in AM operations;

\item The accretion operations will first be conducted to enable more voxels at the top layer to be `removed' by erosion; 

\item Whenever possible multiple erosion operations are applied consecutively to preserve the continuity of AM operations on the top layer.  
\end{itemize}
The following algorithm with 5 major steps is developed by considering the above heuristics.
\begin{itemize}
\item \textbf{Step 1:} Initialize an empty ordered list of operators, $\Gamma$.

\item \textbf{Step 2:} Initialize an empty set $\Lambda$ to store erosion-feasible voxels located on the topmost layer of the current model $M_t$ (i.e., the voxels with their height-indices as a constant $K$).

\item \textbf{Step 3:} For each solid voxel $v_{i,j,K}$, check whether it satisfies the feasibility conditions for erosion. If so, add $v_{i,j,K}$ to the set $\Lambda$.

\item \textbf{Step 4:} For each solid voxel $v_{i,j,K}$ that is not erosion-feasible (e.g., the case shown Fig.~\ref{fig:ringByringAccretion}), attempt to improve its feasibility through stability-enhancing accretion operations:
\begin{enumerate}
    \item Construct a voxel set $\mathcal{A}$, initially containing $v_{i,j,K}$ and all solid voxels in its neighborhood $\mathcal{N}(v_{i,j,K})$;

    \item Iteratively apply feasible accretion operations -- checking all five orientations -- to neighboring voxels of $\mathcal{A}$ that lie below level $K$, adding the resulting solid voxels to $\mathcal{A}$; %\yongxue{check using all five orientations}

    \item Repeat this process ring-by-ring until $v_{i,j,K}$ becomes erosion-feasible by considering all voxels in $\Lambda$ as empty when checking erosion-feasibility\footnote{This is a conservative consideration as we cannot determine the order of following erosion operators in advance.}. Then, append all the applied accretion operations to $\Gamma$, update the state function of the model accordingly, and add $v_{i,j,K}$ to $\Lambda$;

    \item If no further feasible accretion operations can be applied and $v_{i,j,K}$ remains \textit{not} erosion-feasible, the algorithm will move to enhance the next voxel's erosion-feasible.
\end{enumerate}

\item \textbf{Step 5:} Sort all voxels in $\Lambda$, and sequentially apply erosion operations to each (see the right of Fig.~\ref{fig:ringByringAccretion}). Append these operations to $\Gamma$ and update the model state.

\item \textbf{Step 6:} Repeat from Step 2 until the model reaches the null state $M_{\text{null}}$.
\end{itemize}
At the end of this nullification algorithm, the ordered list $\Gamma$ gives the inverse order of AM and SM operations that can fabricate the target model $\mathcal{H}$ by hybrid manufacturing. In this algorithm, Step 4 plays a critical role in improving erosion-feasibility by introducing accretion operations -- i.e., temporary support structures that are later removed -- thus enhancing printability. An illustration can be found in Fig.~\ref{fig:ringByringAccretion}. In Step 5, erosion operations are preferably applied to adjacent voxels to preserve continuity in AM operations. 

\subsection{Completeness of Nullification}\label{subsec:NullifierCompleteness}
We now prove the \textit{completeness} of the nullification algorithm introduced above. The completeness is primarily guaranteed by two key properties of the algorithm:
\begin{itemize}
\item The accretion operations are restricted to voxels located below the topmost layer of the current model;

\item Any voxel located at the boundary of the topmost layer can always be converted into erosion-feasible by accretion operations using an SM tool with length $\bar{L} \geq 2$.
\end{itemize}
The first property stems directly from the algorithm’s design, which prevents accretion from affecting the current top layer. The second property can be proven by analyzing the conditions at a boundary voxel on the topmost layer -- details of this proof are provided below with the help of Fig.~\ref{fig:completenessProof}.

Without loss of generality, a boundary solid voxel $v_{i,j,K}$ at the topmost layer of the current state satisfies one or more of the following conditions: $M_t(v_{i \pm 1,j,K})=0$, $M_t(v_{i ,j \pm 1,K})=0$. According to the algorithm introduced in Sec.~\ref{subsec:SequenceGen}, it is possible to add new voxels at $v_{i \pm 1, j \pm 1, K-1}$ (i.e., the green voxels shown in Fig.~\ref{fig:completenessProof}) from the top, as the tool length $\bar{L} \geq 2$. These accreted voxels will form a connected structure surrounding $v_{i,j,K}$ (including both the green and peach voxels in Fig.~\ref{fig:completenessProof}), ensuring that the resultant model remains stable after applying the erosion operator to $v_{i,j,K}$. Furthermore, since $v_{i,j,K}$ is on the boundary, at least one of the voxels $v_{i \pm 1,j,K-1}$ or $v_{i,j \pm 1,K-1}$ will be added, providing support from below. This guarantees that $v_{i,j,K}$ becomes self-supported. Consequently, the erosion of $v_{i,j,K}$ becomes feasible.

\begin{wrapfigure}[9]{r}{0.13\linewidth}
% \begin{center}
\vspace{-\intextsep}
\centering
\hspace{-18pt}
% \vspace{-20pt}
\includegraphics[width=\linewidth]{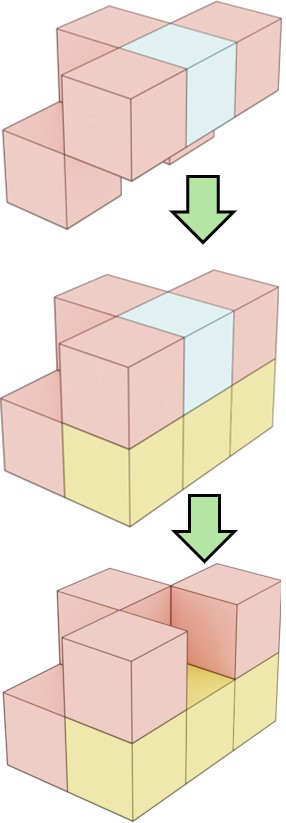}
\put(-5,69){\scriptsize \color{black}$Acc$}
\put(-5,33){\scriptsize \color{black}$Ero$}
% \end{center}
\end{wrapfigure}
Note that for voxels located at the boundary of the working space $\Omega$, accretion operations are usually performed via SM operations along the horizontal directions. The case of accretion by using a horizontal SM tool has been illustrated in the insert figure, where the accretion-added support voxels to be `removed' by a horizontal SM tool are displayed in yellow color. After these accretion operations, the blue voxel becomes erosion-feasible and can be removed. Moreover, when only one layer remains in the model under nullification, all voxels are erosion-feasible as they are supported by the working base.

\section{Scalability and Redundancy}\label{sec:Heuristics}
This section introduces two extended schemes designed to enhance the scalability of our approach: a localized stability check to reduce computational cost, and a pre-processing step to heuristically reduce redundancy in temporary supporting structures.

\subsection{Localized Stability Check}\label{subsec:LocalStabilityCheck}
In the nullification algorithm presented in Sec.~\ref{subsec:SequenceGen}, the most time-consuming step is the stability check. To enhance the scalability of the algorithm for high-resolution voxel models, we introduce two key modifications. First, instead of checking stability after each accretion operation, we perform the stability check after taking a sequence of feasibility enhancing operations. As accretion operations in Step 4 of the nullification algorithm proceed ring by ring, we adopt a ring-wise stability checking strategy. Second, to avoid repeatedly performing global operations for region-growing from the base -- which are computationally expensive -- we develop a local stability check scheme. Our local check approach ensures that all potentially unstable configurations are detected without requiring a full model traversal, thereby significantly improving efficiency while maintaining correctness in manufacturability. Details are presented below.

The local stability check is developed based on two Lemmas.
\begin{lemma}\label{lemma:RemainStable}
If a solid voxel $v_{i,j,k}$ is removed from a stable model, the stability remains if and only if all of $v$'s neighboring solid voxels are connected to the base in the new model.
\end{lemma}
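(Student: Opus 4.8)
The statement is a biconditional, so the plan is to prove the two directions separately, observing at the outset that the forward direction is essentially immediate while the backward direction carries all the content. For the forward direction I would simply unfold the definition of \emph{stable}: it means every solid voxel is connected to the base. Hence if the model stays stable after deleting $v_{i,j,k}$, then in particular every solid voxel in $\mathcal{N}(v_{i,j,k})$ (each of which survives the deletion, being distinct from the removed voxel) is connected to the base. No further argument is needed here.

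For the backward direction I would assume that every solid voxel in $\mathcal{N}(v_{i,j,k})$ is connected to the base in the new model, and prove that an arbitrary surviving solid voxel $w$ is still connected to the base. The key is to reuse the stability of the \emph{original} model: there is a connectivity chain $w = u_0, u_1, \ldots, u_\ell = b$ of solid, pairwise face/edge-neighboring voxels terminating at a base voxel $b$. I would then split into two cases according to whether this chain ever passes through the removed voxel $v_{i,j,k}$.

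If the chain avoids $v_{i,j,k}$ entirely, every voxel on it remains solid after the deletion, so the same chain certifies that $w$ is connected to the base in the new model. If the chain does meet $v_{i,j,k}$, let $p$ be the smallest index with $u_p = v_{i,j,k}$; then the predecessor $u_{p-1}$ is a solid neighbor of $v_{i,j,k}$, i.e. $u_{p-1} \in \mathcal{N}(v_{i,j,k})$, and being distinct from the removed voxel it is still present. The initial segment $u_0, \ldots, u_{p-1}$ avoids $v_{i,j,k}$ by minimality of $p$, so it survives the deletion and shows $w$ is connected to $u_{p-1}$ in the new model; combining this with the hypothesis that $u_{p-1}$ is connected to the base, the transitivity clause of the connectivity definition yields that $w$ is connected to the base. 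Since $w$ was arbitrary, the new model is stable.

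The one place demanding care — and what I regard as the crux — is this rerouting step: one must invoke the \emph{first} visit to $v_{i,j,k}$, so that the predecessor used is guaranteed to be a genuine, still-solid voxel lying in $\mathcal{N}(v_{i,j,k})$, and one must appeal to transitivity of connectivity (condition~(2) of the connectivity definition) to splice the surviving initial segment onto that neighbor's base-connection. The apparent edge cases — $w$ itself being a neighbor of $v_{i,j,k}$, or $v_{i,j,k}$ lying on the base layer — need no separate treatment, since the same chain argument subsumes them.
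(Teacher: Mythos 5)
Your proof is correct and takes essentially the same route as the paper's: necessity by unfolding the definition of stability, and sufficiency by observing that removing $v_{i,j,k}$ can only break base-connectivity along chains that pass through it, so any such chain can be rerouted via a solid neighbor of $v_{i,j,k}$. The paper's own proof leaves this sufficiency step at the level of ``the change in the state of $v$ can only directly affect the connectivity of its neighboring voxels to the base, therefore sufficiency is intuitive''; your first-visit/splicing argument is just the rigorous rendering of that same idea.
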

\begin{proof}
The necessity of this Lemma is determined by the definition of model stability. According to the definition of connectivity, the change in the state of $v$ can only directly affect the connectivity of its neighboring voxels to the base. Therefore, the sufficiency is intuitive.
\end{proof}
\begin{lemma}
\label{lemma:RemainStableS}
If a solid voxel $v_{i,j,k}$ not adjacent to the base is removed from a stable model, the new model remains stable when all neighboring solid voxels of $v_{i,j,k}$ are connected. 
\end{lemma}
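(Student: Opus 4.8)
The plan is to derive this as a corollary of Lemma~\ref{lemma:RemainStable}. That lemma reduces the stability of the model after deleting $v_{i,j,k}$ to a single condition: every solid voxel in $\mathcal{N}(v_{i,j,k})$ must still be connected to the base. So it suffices to show that, whenever $v_{i,j,k}$ is not adjacent to the base and its solid neighbors are mutually connected in the post-removal model, each of those neighbors retains a base connection that does not pass through $v_{i,j,k}$.

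First I would use the hypothesis that $v_{i,j,k}$ is not adjacent to the base to locate one surviving connection. Since the original model is stable, $v_{i,j,k}$ is connected to a base voxel along a simple path $v_{i,j,k}=x_0,x_1,\ldots,x_n$ of pairwise-connected solid voxels with $x_n$ on the base layer. Because $v_{i,j,k}$ is not adjacent to the base, this path has length $n\ge 1$, so $x_1\in\mathcal{N}(v_{i,j,k})$ is an actual solid neighbor; and as the path is simple, its tail $x_1,\ldots,x_n$ never revisits $x_0=v_{i,j,k}$. Hence $x_1$ reaches the base without using $v_{i,j,k}$ and remains connected to the base after the removal.

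Next I would propagate this link to all of $\mathcal{N}(v_{i,j,k})$. By hypothesis the solid neighbors all lie in one connected component of the post-removal model, and $x_1$ belongs to it. Since connectivity is transitive, every solid neighbor is connected to $x_1$, which is connected to the base; thus every neighbor is connected to the base in the new model. Lemma~\ref{lemma:RemainStable} then certifies that the new model is stable, completing the argument.

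The delicate point -- and the reason the ``not adjacent to the base'' assumption cannot be dropped -- is the degenerate case where $v_{i,j,k}$ itself sits on the base. Then its base connection has length zero and need not route through any neighbor, so the neighbors could be mutually connected yet owe their only base link to $v_{i,j,k}$; removing it would then violate stability despite the mutual-connectivity hypothesis holding. I would therefore be careful to confirm that this case is precisely what the hypothesis excludes, and to justify that a simple base-connecting path can always be chosen whose tail is disjoint from $v_{i,j,k}$; these are the only two non-routine steps, and the rest follows mechanically from Lemma~\ref{lemma:RemainStable}.
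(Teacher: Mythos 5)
Your proof is correct and takes essentially the same route as the paper's: both reduce the claim to Lemma~\ref{lemma:RemainStable}, observe that since $v_{i,j,k}$ is not at the base its base-connecting path must exit through some solid neighbor which therefore keeps its base connection after the removal, and then use the mutual connectivity of the neighbors to conclude. Your simple-path argument (and the counterexample showing the base-layer hypothesis is necessary) just makes explicit steps the paper treats as immediate.
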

\begin{proof}
The stability of the original model ensures $v_{i,j,k}$ is connected to the base. This connectivity must be achieved through its neighboring solid voxels since $v_{i,j,k}$ is not at the base. This means that after removing $v_{i,j,k}$, at least one neighboring solid voxel of $v_{i,j,k}$ remains its connectivity to the base. Thus, if all of $v_{i,j,k}$'s neighboring solid voxels are connected, they are all connected to the base. According to Lemma \ref{lemma:RemainStable}, the new model is stable.
\end{proof}

\begin{definition}\label{def:DeltaNeighbor}
The $\Delta$-neighborhood of a solid voxel $v_{i,j,k}$ are those solid voxels $v_{r,s,t}$ connecting to $v_{i,j,k}$ and satisfying $\| (i-r,j-s,k-t) \|_{\infty} \leq \Delta$.
\end{definition}

We now present the localized stability check algorithm based on the above lemmas and definitions. The algorithm consists of four steps:

\begin{itemize} \item \textbf{Step 1}: Construct a voxel set $\Psi$ by collecting all voxels within the $\Delta$-neighborhood of $v_{i,j,k}$.

\item \textbf{Step 2}: Apply a flooding algorithm within $\Psi$ to identify connected components.

\item \textbf{Step 3}: If there is only one connected component in $\Psi$ and $v_{i,j,k}$ is not adjacent to the base (i.e., $k > 1$), return \textbf{True} to indicate the model remains stable.

\item \textbf{Step 4}: If any connected component in $\Psi$ is not adjacent\footnote{This adjacency detection is performed locally, which may produce conservative results with some misclassifications regarding stability.} to the base, return \textbf{False}; otherwise, return \textbf{True}. 
\end{itemize}
Given that the localized stability check is performed as part of the erosion-feasibility assessment, the algorithm presented in Sec.~\ref{subsec:SequenceGen} needs only to confine its accretion operations (i.e., Step 4-(2)) to the local region within the $\Delta$-neighbors as well.
%\yongxue{Given that the localized stability check is performed as part of the erosion-feasibility assessment, the algorithm presented in Section 3.2 needs only confine its accretion operations (Step 4(2)) to the $\Delta$-neighborhood, obviating the need for operations in the global space}
The accuracy of this localized check inherently depends on the size of the $\Delta$-neighborhood. A larger $\Delta$ leads to a more accurate classification but also takes a higher computational cost. 

An advantage of this algorithm is that: it may mistakenly classify a stable configuration as unstable, but it will never misclassify an unstable configuration as stable. Such conservative misclassifications may lead to suboptimal results in the planning algorithm, but they do not compromise manufacturability. For example, as shown in Fig.~\ref{fig:localStabilityCheck}, when a small neighborhood size such as $\Delta = 1$ is used, the configurations in Fig.~\ref{fig:localStabilityCheck}(b) and (c) may be misclassified as unstable. However, the truly unstable configuration in Fig.~\ref{fig:localStabilityCheck}(d) is correctly detected. 

\begin{figure}\centering
\includegraphics[width=.9\linewidth]{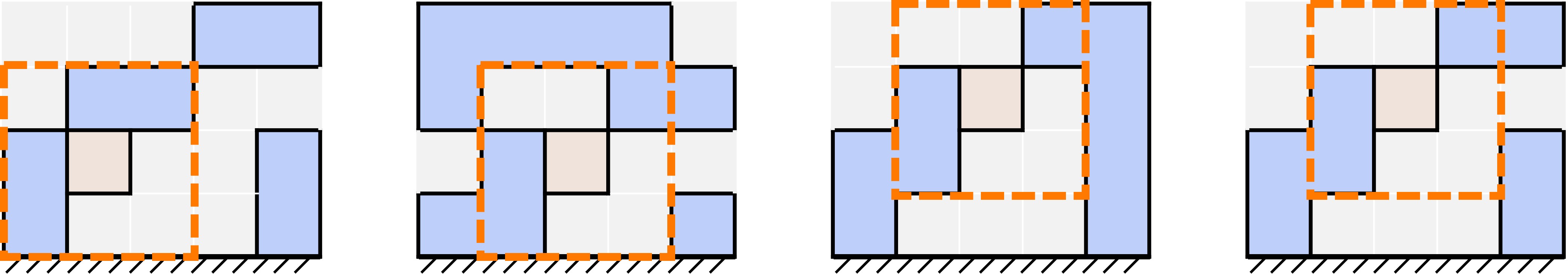}
\put(-228,3){\footnotesize \color{black}(a)}
\put(-170,3){\footnotesize \color{black}(b)}
\put(-112,3){\footnotesize \color{black}(c)}
\put(-55,3){\footnotesize \color{black}(d)}
\put(-210,14){\footnotesize \color{black}$v_{i,j}$}
\put(-143,14){\footnotesize \color{black}$v_{i,j}$}
\put(-85,23){\footnotesize \color{black}$v_{i,j}$}
\put(-28,23){\footnotesize \color{black}$v_{i,j}$}
\caption{Four 2D example cases illustrating the stability of a model after removing a voxel $v_{i,j}$: the cases (a–c) result in a stable model, while (d) leads to an unstable one. The squares with dashed lines indicate the search range for the localized stability check with $\Delta = 1$. When $\Delta = 1$ is used, (b) \& (c) will be misclassified as unstable; but accurate classification can be obtained for (b) \& (c) by using $\Delta = 2$.
}\label{fig:localStabilityCheck}
\end{figure}

\subsection{Pre-Processing: Removal Support}\label{subsec:PreProcessing}
The nullification algorithm presented in Sec.~\ref{sec:Nullification} guarantees the completeness of generating a feasible sequence of AM and SM operations for an input model of arbitrary shape. However, in practical applications, we observe that the algorithm occasionally produces redundant temporary supporting structures -- even though these supports are always eventually removed via SM operations. This redundancy primarily arises from the local search nature of the feasibility enhancing accretion step, which adds support below voxels whose erosion requires additional structural reinforcement. 

To address this issue of redundancy, we introduce a pre-processing step that mitigates the support requirement in a less localized manner. Specifically, regions of the input model $\mathcal{H}$ that require support are first identified. Then, support structures that can be subsequently removed by SM operations are heuristically added through feasible accretion operations, resulting in an enriched model $\mathcal{H}^*$ such that $\mathcal{H} \subset \mathcal{H}^*$. These accretion operations are prepended to the inverse operation sequence $\Gamma$, and the nullification algorithm introduced above is applied to $\mathcal{H}^*$ to further update $\Gamma$. 

\begin{figure}\centering
\includegraphics[width=.9\linewidth]{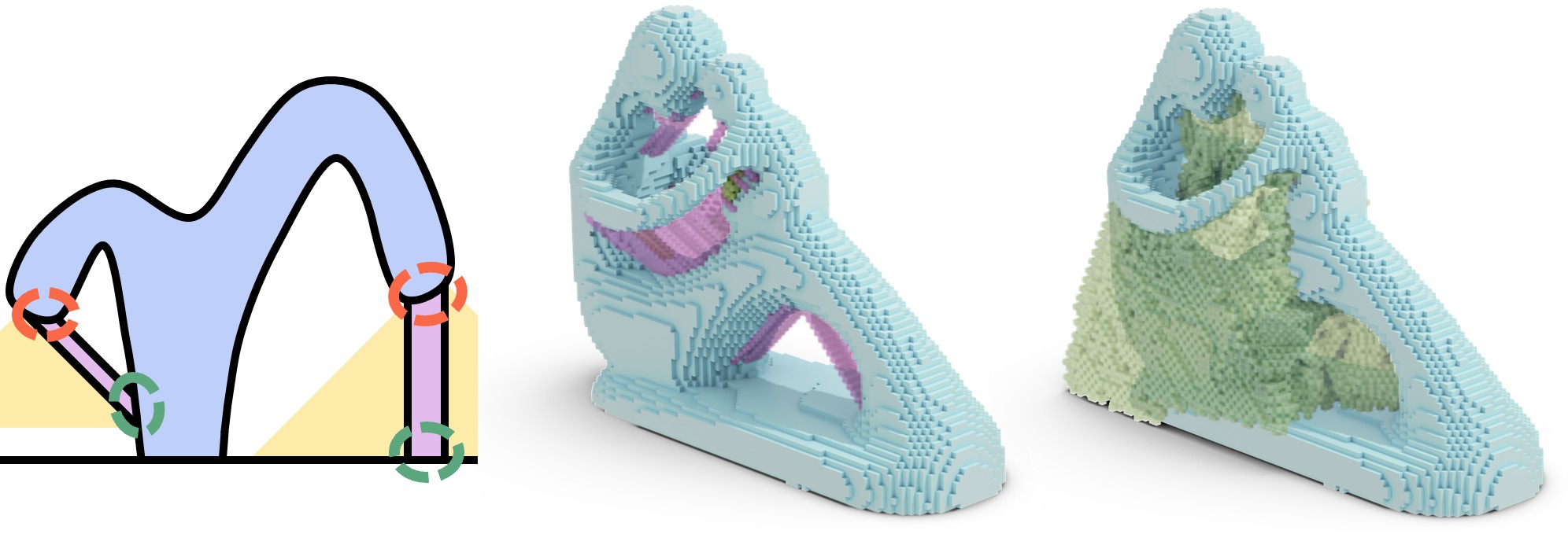}
\put(-230,2){\footnotesize \color{black}(a)}
\put(-142,2){\footnotesize \color{black}(b)}
\put(-65,2){\footnotesize \color{black}(c)}
\put(-37,70){\footnotesize \color{black}Support Vox. \#:}
\put(-11,63){\footnotesize \color{black}14,765}
\put(-114,70){\footnotesize \color{black}Support Vox. \#:}
\put(-85,63){\footnotesize \color{black}2,399}
\caption{Pre-processing to add temporary supports. (a) As illustrated in a 2D example, the regions outlined with red dashed lines exhibit significant overhangs, where removable temporary structures (shown in purple) are generated from the nearest touching points (marked by green dashed circles) within the search region (highlighted in yellow). When applying to the Fertility model represented by a voxel res.: $100 \times 38 \times 74$, the pre-processing step can effectively reduce the number of temporary support voxels by 83.75\% -- (b) with vs. (c) without pre-processing. Note that, this test employs an SM tool with length $\bar{L}=30$.%\charlie{fig to update}\yongxue{updated}
}\label{fig:Preprocessing}
\end{figure}

We present the details of our pre-processing method below with the help of an illustration shown in Fig.~\ref{fig:Preprocessing}(a).
\begin{itemize}
\item \textbf{Step 1:} Identify all voxels in the input model $\mathcal{H}$ that require additional support during AM operations, and collect them into a set $\Phi$. These voxels are then sorted by their distance to the bounding box of the model, prioritizing inner voxels. This is a heuristic rule to reduce the chance that the supports generated for the outer voxels block those of the interior voxels.

\item \textbf{Step 2:} For each voxel $v \in \Phi$, a progressive voxel growth is performed using feasible accretion operations to construct candidate support structures. This growth occurs ring by ring within a downward conical search region below $v$, defined by a cone angle determined by the self-supporting threshold of AM (assumed to be $45^\circ$ under voxel representation). The process continues until the growing structure $\mathcal{S}(v)$ reaches either the ground or a self-supported region of $\mathcal{H}$ -- the first contact location is referred to as the \textit{touching point} (see Fig.~\ref{fig:Preprocessing}(a)). The voxels along the shortest path from the touching point to $v$ are then converted into solids, and the corresponding accretion operations are appended to the beginning of the inverse operation sequence $\Gamma$. 

\item \textbf{Step 3:} If no valid accretion path can be found for voxel $v$, meaning it cannot be supported in a way that allows removal via SM at the end of fabrication, the process skips $v$ and continues with the next voxel in $\Phi$.

\item \textbf{Step 4:} Repeat Steps 2 and 3 until all voxels in $\Phi$ have been processed.
\end{itemize}
Although this is a heuristic algorithm inspired by how humans intuitively perform AM and SM operations to fabricate geometrically complex models, we find that it can effectively reduce the total number of operations -- see the example as shown in Fig.~\ref{fig:Preprocessing}(b)-(c).
\section{Implementation Details}\label{sec:Implementation}

\subsection{Post-Processing: Toolpath Generation}\label{subsec:PostProc}
After generating the inverse sequence of AM and SM operations using the nullification algorithm with voxels as the unit of operation, we convert this sequence into actual AM and SM toolpaths during the post-processing step. In practice, the physical dimensions of AM and SM tools may differ from the voxel size. For example, in all our test cases, voxels are defined with a unit size of $1.2\text{mm}$, which matches the diameter of the SM tool ($1.2\text{mm}$) but is twice the diameter of the AM nozzle ($0.6\text{mm}$).

To account for this difference, neighboring voxels corresponding to AM operations are grouped into AM `patches', and continuous zig-zag AM toolpaths are generated patch by patch. A similar strategy is applied to SM operations by grouping their corresponding voxels into SM `patches'. However, since the orientation of the SM tool can vary across operations, the grouping process must consider both spatial and orientation continuity -- i.e., SM operations with different orientations must be assigned to separate patches.

\subsection{Extension for Filament-Based AM}\label{subsec:AMPrefer}
When applying feasible erosion operations to generate the voxel set that determines the sequence of AM operations (i.e., Steps 3–5 of the nullification algorithm), the minimum printable feature size (MPFS) required by filament-based AM processes -- such as fused deposition modeling -- is not explicitly enforced. While our hybrid manufacturing planning algorithm is general, it can be adapted to incorporate such constraints. For example, erosion operations can be restricted to cases where at least $M$ connected voxels on the top layer can be `removed' together when it is possible, ensuring compatibility with the minimum feature size required for AM. 

\begin{wrapfigure}[5]{r}{0.55\linewidth}
\vspace{-\intextsep}
\centering
\hspace{-18pt}
\includegraphics[width=1.1\linewidth]{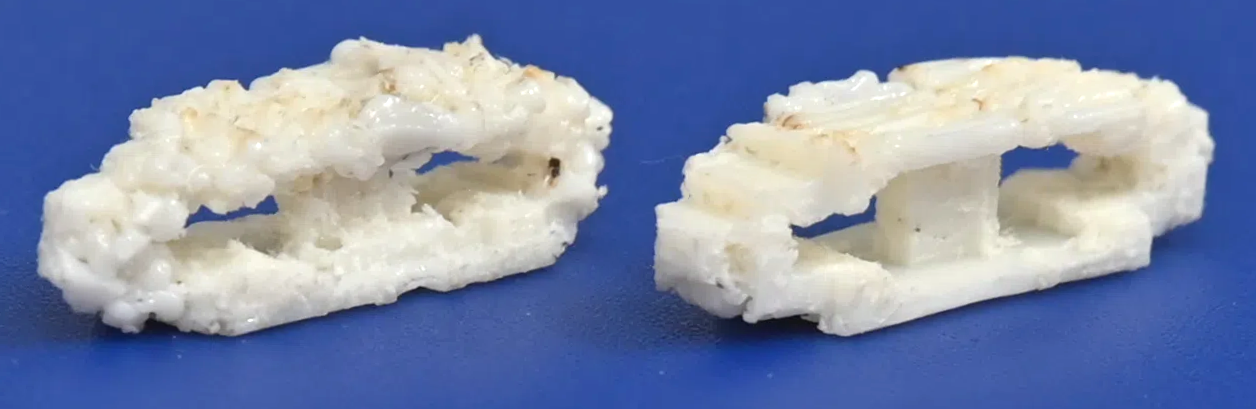}
\put(-130,2){\scriptsize \color{white}MPFS Not Considered}
\put(-49,2){\scriptsize \color{white}MPFS Considered}
\end{wrapfigure}
To enhance the connectivity of voxels that are feasible for erosion, we apply extra accretion operations as a `pre-processing' step before we start to nullify a new layer. It is important to note that this heuristic is optional. Scattered voxels may still need to be eroded individually to maintain the completeness of our nullification algorithm; however, such a case happens rarely. $M=10$ is employed in the test examples presented in this paper.
\begin{table*}[t]
\caption{Computational statistics -- all use tool length: $\bar{L}=10$ voxels.}%\vspace{-5pt}
\centering
\label{tab:CompStatistic}
\footnotesize
\begin{tabular}{l | c || r | r || c || r  r | r || r || r  r }
\hline
 & & Planning Space & & & \multicolumn{3}{c||}{Computational Time (Sec.)} & &\multicolumn{2}{c}{Support Voxels}\\
\cline{6-8} \cline{10-11}
Model  & Figure & Voxel Resolution & \# Solid Voxels  & $\Delta$-Value & Pre-Proc. & Nullification$^\ddag$ & Total & \# Operations & \# Pre-Proc. & \# Total \\
\hline \hline
GE-Bracket  & \ref{fig:teaser} & 100 $\times$ 60 $\times$ 34 & 16,403 & 10 & 0.63 & 11.46 (6.70) & 12.09 & 31,745 & 217 & 7,671\\
MBB Beam  & \ref{fig:MBBComp}(a) & 100 $\times$ 20 $\times$ 20 & 4,008 & 10 & 0.34 & 0.48 (0.39) & 0.82 & 8,908 & 734 & 2,450\\
Fertility$^\dag$  & \ref{fig:FertilityProgressResults} & 100 $\times$ 38 $\times$ 74 & 62,152 & 20 & 36.14 & 168.02 (129.14) & 204.16 & 80,774 & 635 & 9,311\\
Jennings Dog & \ref{fig:DogProgressiveHMResults} & 80 $\times$ 66 $\times$ 100 & 111,910 & 10 & 2.51 & 193.60 (125.90) & 196.11 & 124,310 & 693 & 6,200 \\
TPMS  & \ref{fig:TPMSResults} & 50 $\times$ 50 $\times$ 50 & 47,824 & 10 & 0.68 & 29.47 (25.49) & 30.15 & 65,868 & 1,428 & 9,022 \\

\hline
\end{tabular}
\begin{flushleft}
\footnotesize
$^\dag$~Note that a larger search range of $\Delta = 20$ is used for the Fertility model, which leads to fewer hybrid manufacturing operations while maintaining fast computation time. \\
$^\ddag$~The time reported in the bracket is for the total time used for stability evaluation.
\end{flushleft}
\end{table*}

\section{Results and Discussion}
The effectiveness and efficiency of our approach have been tested in both computational and physical experiments, which will be presented first. After that, the limitations and potential future work will be discussed at the end of this section.

\subsection{Computational Results}
Our method has been implemented in C++. All computational experiments are performed on a desktop PC with an Intel Core i9-13900K CPU (8 P-cores @ 5.8 GHz + 16 E-cores @ 4.3 GHz) with 128GB RAM, running Windows 10. 

\begin{figure}
\centering
\includegraphics[width=\linewidth]{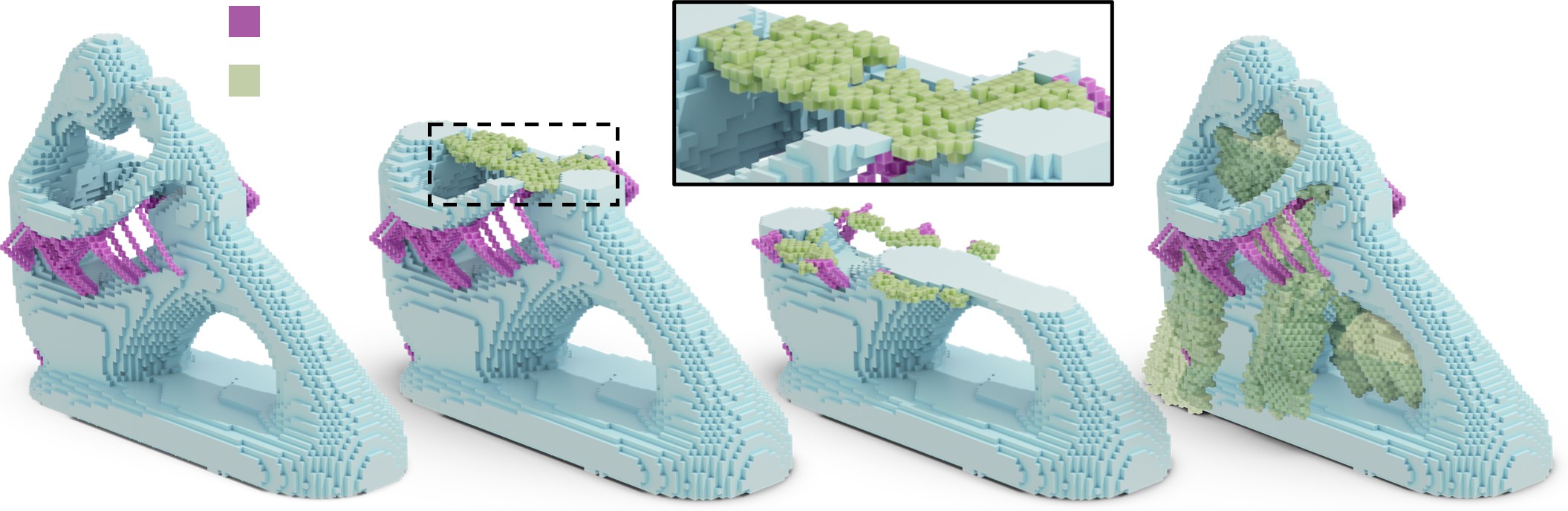}
\put(-200,74){\footnotesize \color{black}Added by Pre-Proc.}
\put(-200,65){\footnotesize \color{black}Added by Nullifi.}
\put(-232,3){\footnotesize \color{black}(a)}
\put(-172,3){\footnotesize \color{black}(b)}
\put(-112,3){\footnotesize \color{black}(c)}
\put(-52,3){\footnotesize \color{black}(d)}
\caption{The result of the hybrid manufacturing process planning on the Fertility model by using an SM tool with length $\bar{L} = 10$. From left to right -- (a) the result of pre-processing, (b, c) the progressive results of 1/3 \& 3/5 layers nullified, and (d) the model with all temporary support voxels displayed. The supports added in the pre-processing step are displayed in purple color. 
}\label{fig:FertilityProgressResults}
\end{figure}

\begin{figure}
\centering
\includegraphics[width=\linewidth]{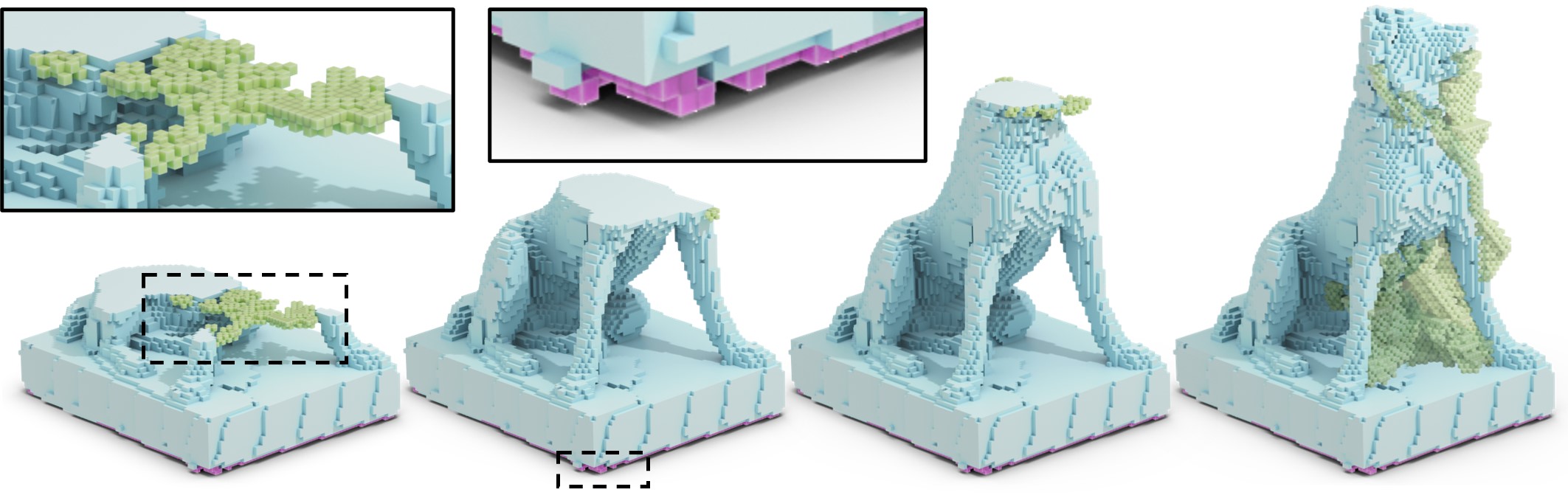}
\put(-235,3){\footnotesize \color{black}(a)}
\put(-175,3){\footnotesize \color{black}(b)}
\put(-115,3){\footnotesize \color{black}(c)}
\put(-55,3){\footnotesize \color{black}(d)}
\caption{The result of the hybrid manufacturing sequence determined by our algorithm on the Jennings Dog example -- (a-c) the progressive results of 1/4, 1/2 \& 3/4 height produced and (d) the model with all temporary support voxels displayed.
}\label{fig:DogProgressiveHMResults}
\end{figure}

\subsubsection{Examples and computational statistics}
We have evaluated our method on a variety of models with complex geometries. The first example is the GE-Bracket model generated through topology optimization (Fig.~\ref{fig:teaser}), a widely used benchmark in computational design and manufacturing. The model is voxelized at a resolution of $100 \times 60 \times 34$ for hybrid manufacturing planning, resulting in an input model containing 16.4k solid voxels. In the pre-processing step, 217 voxels are added as temporary supports. During the planning process, an additional 7.6k support voxels are introduced, resulting in a total of 31.7k hybrid manufacturing operations when using a tool length of $\bar{L} = 10$ voxels (equivalent to 12mm in physical units, assuming a voxel width of 1.2mm). 
The second model is the MBB beam shown in Fig.~\ref{fig:MBBComp}(a), which consists of 4.0k solid voxels. Using an SM tool with $\bar{L} = 10$ voxels, the result can be obtained in less than one second due to the small number of voxels involved.
The third example is the Fertility model, consisting of 62.2k solid voxels, shown in Fig.~\ref{fig:FertilityProgressResults}. For this case, we conservatively use the same shorter tool length of $\bar{L} = 10$ voxels. Compared to the GE-Bracket model, the Fertility model contains more solid voxels and thus requires a longer computation time. 
The forth example -- a Jennings Dog with more than 111.9k voxels as shown in Fig.~\ref{fig:DogProgressiveHMResults}. However, as fewer additional temporary supports (6.2k) are needed, both the total number of operations and the computational time are much smaller and shorter. Our planning algorithm gives a resultant sequence of 124.3k hybrid manufacturing operations. 

The final example is a triply periodic minimal surface (TPMS) structure as shown in Fig.~\ref{fig:TPMSResults}, which features an even more intricate topology (genus number 72). The model is voxelized at half resolution, yielding a comparable number of solid voxels (47.8k). Our algorithm determines a hybrid manufacturing sequence with 65.8k operations to fabricate this structure. All these examples present significant challenges for traditional AM or SM processes. However, feasible hybrid manufacturing plans can be computed efficiently with our method. Detailed computational statistics are summarized in Table~\ref{tab:CompStatistic}. It can be found that the stability check is still the most time-consuming step, even after applying the strategy of localized check. In some cases, up to 76.9\% of the total computation time for nullification is devoted to stability evaluation.

\begin{figure}
\centering
\includegraphics[width=\linewidth]{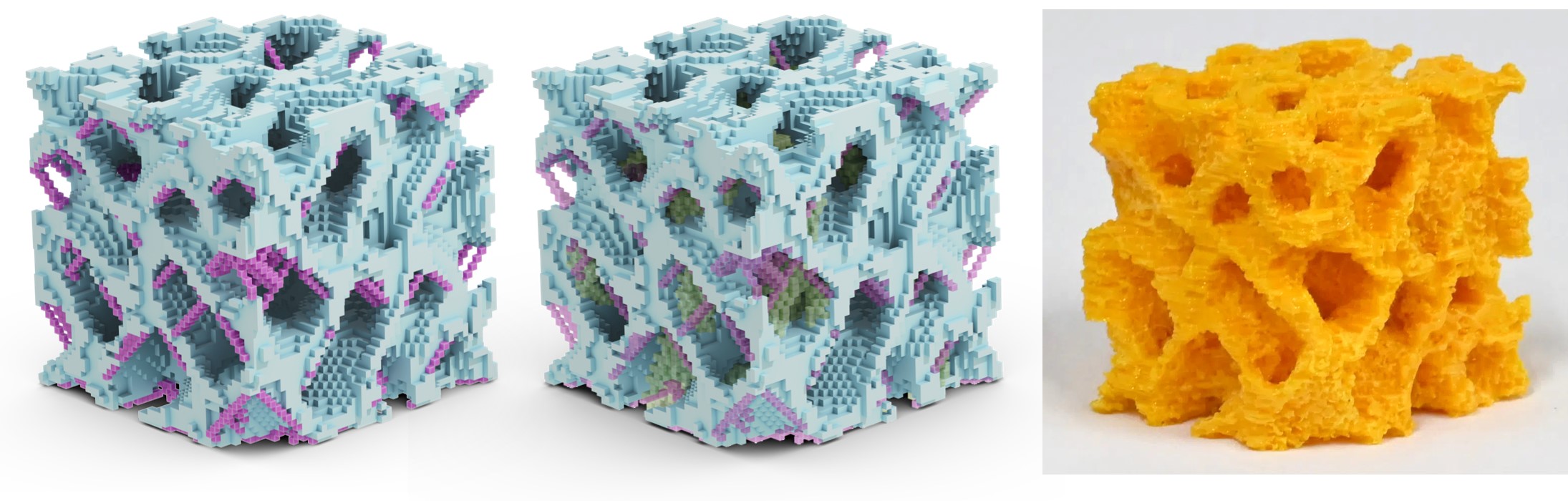}
\put(-243,8){\footnotesize \color{black}(a)}
\put(-165,8){\footnotesize \color{black}(b)}
\put(-90,8){\footnotesize \color{black}(c)}
\setlength{\abovecaptionskip}{4pt}
\caption{The result of a TPMS structure with complex topology (genus: 72) that can be automatically fabricated by our method -- (a) the model after pre-processing, (b) the model with all temporary support voxels visualized and (c) the physically fabricated model. 
}\label{fig:TPMSResults}
\end{figure}

\subsubsection{Study of SM tool length} 
After evaluating the overall performance of our algorithm on models with different topological complexity, we now study its behavior under different tool lengths for SM operations. We test a range of tool lengths, from the most conservative setting of $\bar{L}=2$ to the most aggressive choice of $\bar{L}=100$, including the practical setting of $\bar{L}=10$ used in our earlier experiments. These tests are conducted on two models -- the Fertility and the GE-Bracket -- with the results presented in Figs.~\ref{fig:toolLength_fertility} and \ref{fig:toolLength_bracket}. The results show that using very short SM tools leads to a large number of additional support voxels, which in turn increases computational time due to the extra effort required for the feasibility check.

We also quantify the number of voxels added during different phases of the pipeline -- specifically during the pre-processing and nullification steps. For both models, we observe that longer SM tools significantly reduce the number of voxels added during the nullification step, although they may increase computation time in certain cases, such as when changing from $\bar{L}=10$ to $\bar{L}=30$. 

\begin{figure}
    \centering
    \includegraphics[width=1\linewidth]{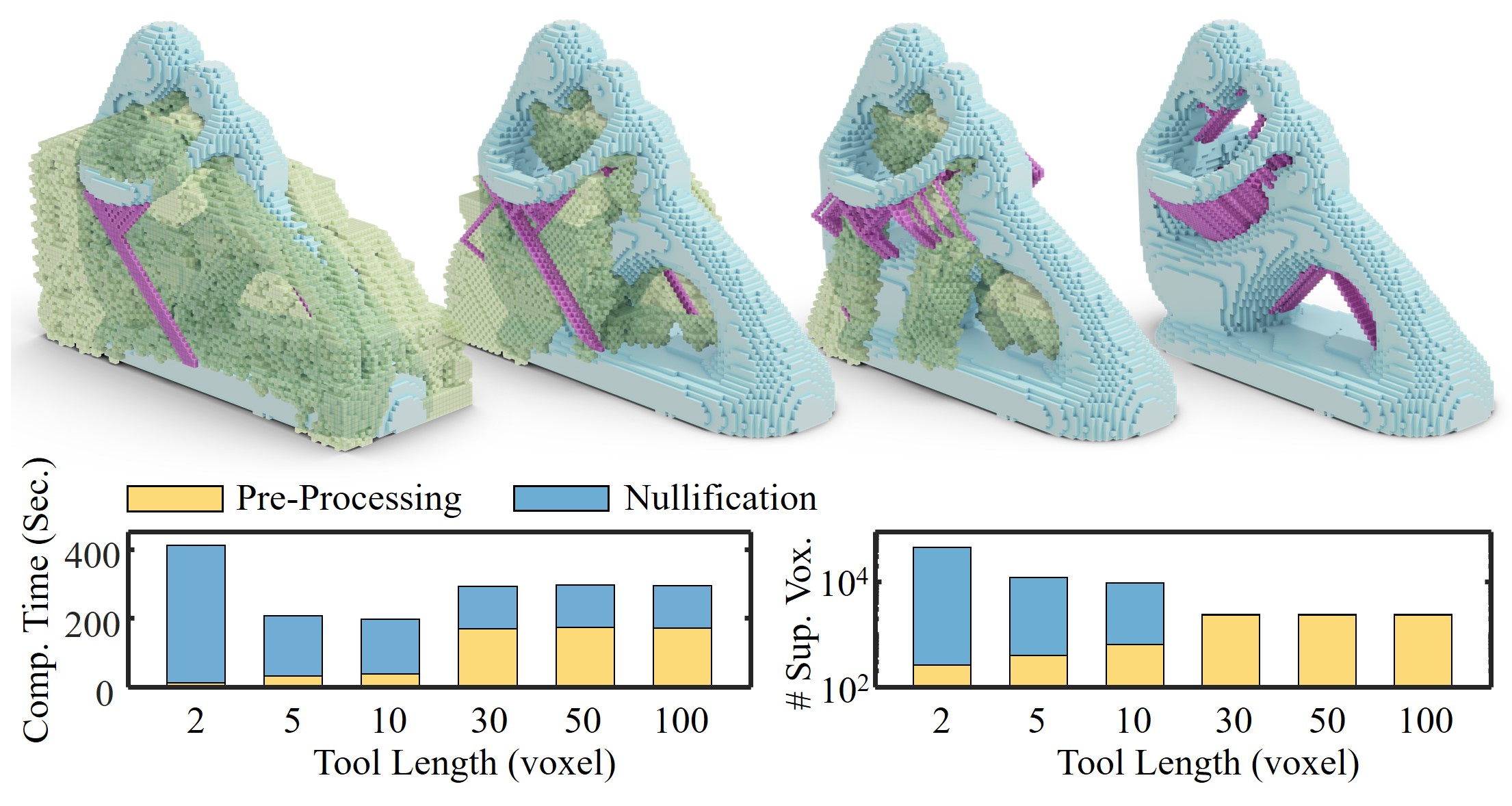}
    % \put(-240,189){\footnotesize \color{black} \textbf{Without Pre-Processing} }
    % \put(-240,127){\footnotesize \color{black}\textbf{With Pre-Processing}}
    \put(-205,118){\footnotesize \color{black}$\Bar{L}=2$ }
    \put(-140,118){\footnotesize \color{black}$\Bar{L}=5$ }
    \put(-85,118){\footnotesize \color{black}$\Bar{L}=10$ }
    \put(-34,118){\footnotesize \color{black}$\Bar{L}=100$ }
\setlength{\abovecaptionskip}{5pt}
\caption{Results using different tool lengths on the Fertility model (voxel Res.: $100 \times 38 \times 74$). Charts showing the computation time and the number of support voxels are provided.
}\label{fig:toolLength_fertility}
\end{figure}

\begin{figure}
    \centering
    \includegraphics[width=1\linewidth]{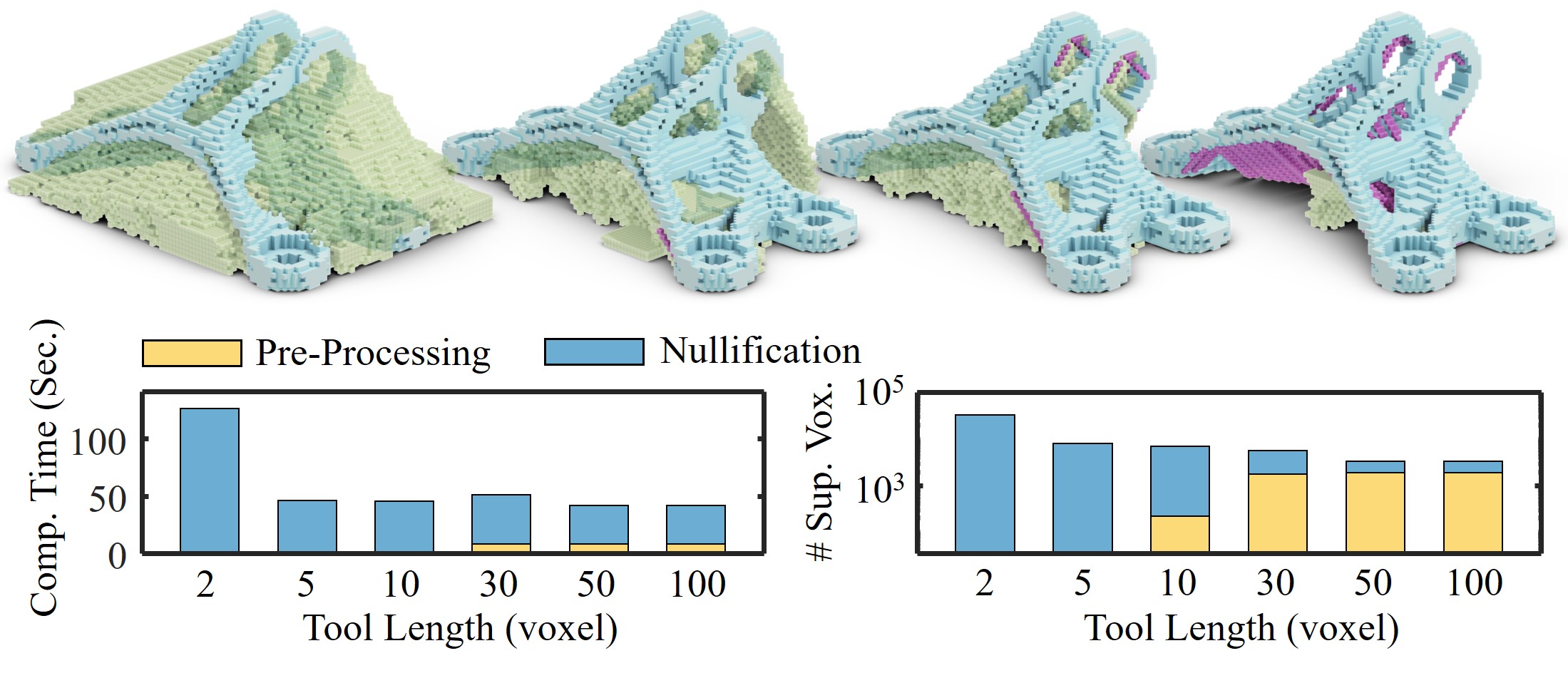}
    \put(-236,95){\footnotesize \color{black}$\Bar{L}=2$ }
    \put(-168,95){\footnotesize \color{black}$\Bar{L}=5$ }
    \put(-112,95){\footnotesize \color{black}$\Bar{L}=10$ }
    \put(-60,95){\footnotesize \color{black}$\Bar{L}=100$ }
    \setlength{\abovecaptionskip}{5pt}
    \caption{Results using different tool lengths on the GE-Bracket model (voxel Res.: $100 \times 60 \times 34$). Charts showing the computation time and the number of support voxels are provided.
    }\label{fig:toolLength_bracket}
\end{figure}

\begin{figure}
\centering
\includegraphics[width=1\linewidth]{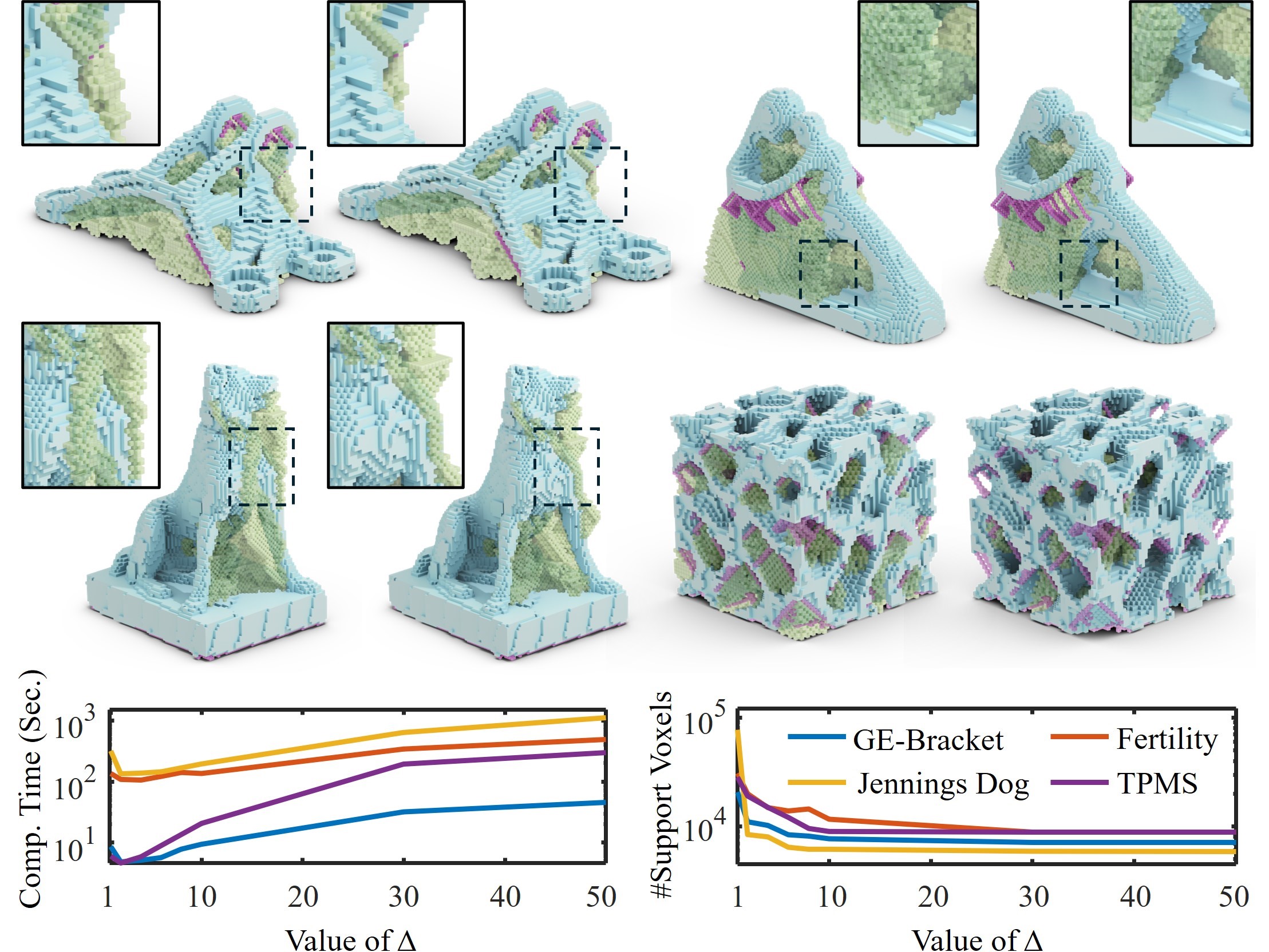}
\put(-231,127){\scriptsize \color{black}$\Delta=5$ }
\put(-173,127){\scriptsize \color{black}$\Delta=10$ }
\put(-108,116){\scriptsize \color{black}$\Delta=5$ }
\put(-56,116){\scriptsize \color{black}$\Delta=10$ }
\put(-231,55){\scriptsize \color{black}$\Delta=5$ }
\put(-174,55){\scriptsize \color{black}$\Delta=10$ }
\put(-108,55){\scriptsize \color{black}$\Delta=5$ }
\put(-56,55){\scriptsize \color{black}$\Delta=10$ }
\caption{Study for the influence of the search range $\Delta$ on the computing time and the number of additional support voxels.}\label{fig:stabilityCheckSize}
\end{figure}

\subsubsection{Search range for stability check}
Another important study concerns the influence of the search range used in the localized stability check -- i.e., the parameter $\Delta$ defined in Def.~\ref{def:DeltaNeighbor}. We conduct tests on four models, with the results shown in Fig.~\ref{fig:stabilityCheckSize}. As expected, a larger search range generally increases computation time. On the other hand, increasing the value of $\Delta$ helps reduce the number of `fake' unstable cases, thereby minimizing the number of unnecessary support voxels added. To keep a balance between computational efficiency and solution quality, we use $\Delta = 10$ or $20$ for the examples presented in this paper.

\subsubsection{Validation of scalability}
We evaluate our method on the GE-Bracket and Fertility models at varying resolutions, using a fixed tool length of $\bar{L} = 10$ voxels and a local search range of $\Delta = 10$. The results are presented in Fig.~\ref{fig:diffResolution}. Thanks to the scalable design of our algorithm, it performs well even on models with up to 979k solid voxels -- see the Fertility model at the resolution of $250 \times 95 \times 186$. The result has more than 1.26M HM operations. As the resolution increases -- meaning the model size grows relative to the fixed tool length -- it becomes harder for the SM tool to access interior regions, resulting in fewer support structures being added during the pre-processing step.

\begin{figure}
\centering
\includegraphics[width=\linewidth]{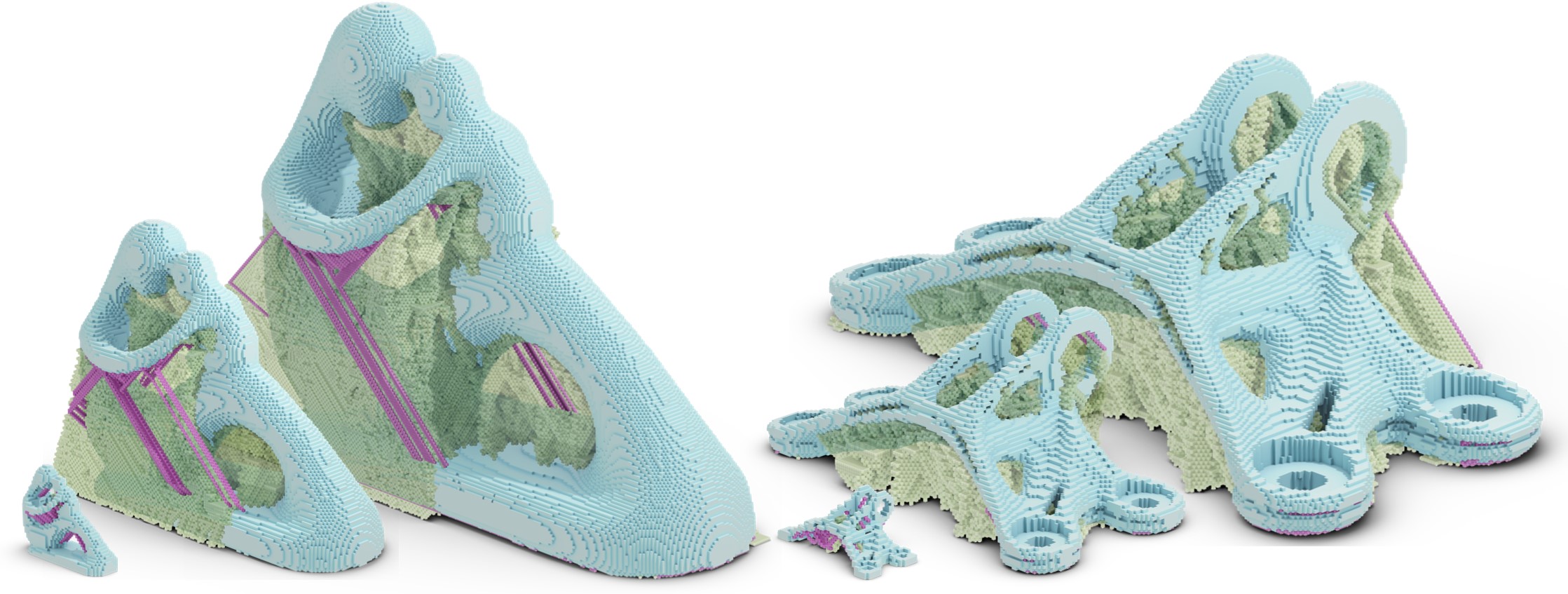}\\
%\vspace{5pt}
\footnotesize
   \begin{tabular}{r|r|r|r|r|r}
    \hline
        & \multicolumn{2}{c|}{Voxel \#} & \multicolumn{2}{c|}{Comp. Time (Sec.)} & Total \\
        \cline{2-5}
    Fertility  & Solid  & Support  & Pre-proc. & Nullification &  Operation \# \\
    \hline
    \hline
    50 $\times$ 18 $\times$ 37 & 7556 & 437 & 0.08 & 1.50 & 8,430 \\
    150 $\times$ 57 $\times$ 111 & 211,002 & 41,105 & 642.46 & 787.90 & 293,212 \\
    250 $\times$ 95 $\times$ 186 & 979,465 & 139,361 & 12107.50 & 13223.2 & 1,258,187 \\
    \hline
    \hline
        & \multicolumn{2}{c|}{Voxel \#} & \multicolumn{2}{c|}{Comp. Time (Sec.)} & Total \\
        \cline{2-5}
    GE-Bracket & Solid  & Support & Pre-proc. & Nullifi. &  Operation \# \\
    \hline
    \hline
    50 $\times$ 30 $\times$ 17 & 2,009 & 548 & 0.02 & 0.25 & 3,105 \\
    150 $\times$ 90 $\times$ 52 & 55,697 & 25,735 & 31.29 & 238.03 & 107,167 \\
    250 $\times$ 151 $\times$ 87 & 258,581 & 101,330 & 520.38 & 4061.32 & 461,241 \\
    \hline
    \end{tabular}
\caption{Results of the Fertility and GE-Bracket models in different resolutions while using the same SM tool length as $\bar{L}=10$. %\yongxue{updated}%\charlie{please change support from \% to voxel \#.}\yongxue{changed}
}\label{fig:diffResolution}
\end{figure}

\subsubsection{Verification of completeness} 
To verify the completeness of our method, we test our nullification algorithm on 100 models randomly selected from the Thingi10K dataset~\cite{Thingi10K}. The hybrid manufacturing sequences for all models can be successfully generated, and the results are shown in Fig.~\ref{fig:Thingi10K}. 

\subsection{Physical Fabrication}
\begin{figure}
\centering
\includegraphics[width=\linewidth]{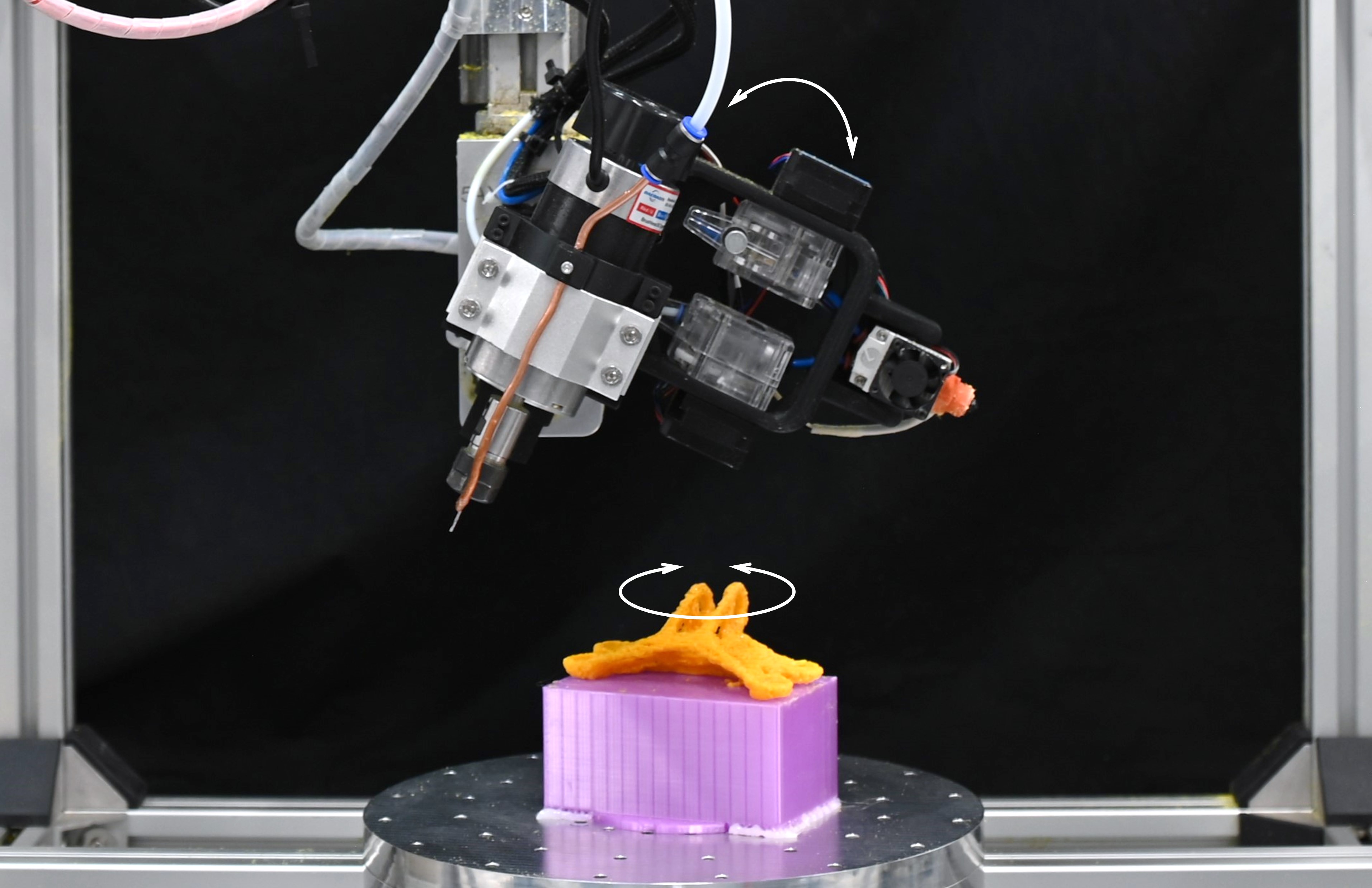}
\put(-87,140){\footnotesize \color{white}B-axis}
\put(-97,57){\footnotesize \color{white}C-axis}
\caption{The hybrid manufacturing hardware employed in physical verification, where the rotations around the B-axis and C-axis are illustrated. 
}\label{fig:Hardware}
\end{figure}

\begin{figure}
\centering
\includegraphics[width=\linewidth]{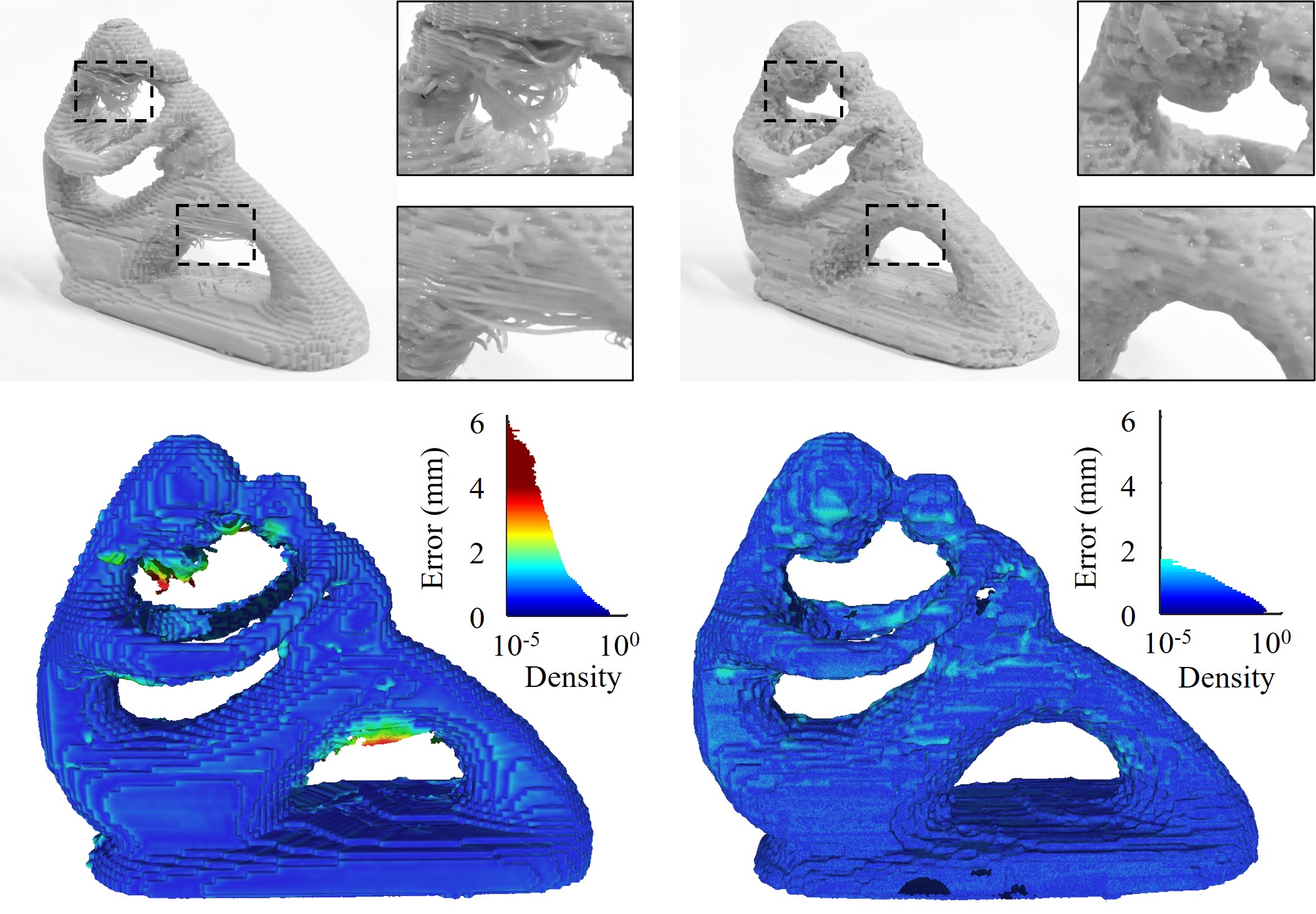}
\put(-240,3){\footnotesize \color{black}(a)}
\put(-118,3){\footnotesize \color{black}(b)}
\caption{Quantitative comparison of the Fertility model, enhanced with machinable support structures generated by our pre-processing step, fabricated using (a) AM-only operations (with pre-added supports removed by SM operations at the end) and (b) the hybrid operations planned by our method. The top row shows the physical fabrication results, while the bottom row visualizes the geometric errors of the 3D scanned models relative to the target shape using color maps.}\label{fig:FertilityPhysicalResults}
\end{figure}

\subsubsection{Hardware}
We validated the hybrid manufacturing sequences generated by our algorithm on a machine modified from a 5-axis CNC system (5XM600XL from 5AxisMaker) as shown in Fig.~\ref{fig:Hardware}. A newly designed mounting bracket was developed to position the printhead (for AM) and the spindle (for SM) in an orthogonal configuration. This setup enables automatic switching between AM and SM operations and allows the SM tool to be oriented either vertically or horizontally by the rotation around the B-axis. Additionally, by rotating the machine’s C-axis (located at the base) to four orthogonal positions, four distinct horizontal SM tool orientations relative to the workpiece can be achieved.

For the AM operations, we use a Fused Deposition Modeling (FDM) extruder equipped with a 0.6mm diameter nozzle, operating at an extrusion rate of 400mm/min. The SM capability is provided by a Baedalus CNC brushless spindle (500W, 48V DC, with a maximum speed of 12,000 RPM). A single-flute end mill with 1.2mm diameter cutter and 12mm tool length (i.e., $\bar{L} = 10$) is employed for machining operations. The following cutting parameters are used: spindle speed of 6,000RPM and feedrate of 900mm/min.

System control and synchronization are handled through an integrated architecture. A Duet 3 MainBoard 6HC serves as the control board, managing both the filament feed mechanism of the FDM extruder and the on/off state of the SM spindle. A host PC oversees the overall operation. Communication between the host PC and the CNC machine’s native motion controller is established via a serial connection, while the PC interfaces with the control board over Ethernet. This setup enables synchronized coordination between the machine’s kinematic movements and the HM process.

\begin{figure}
\centering
\includegraphics[width=.9\linewidth]{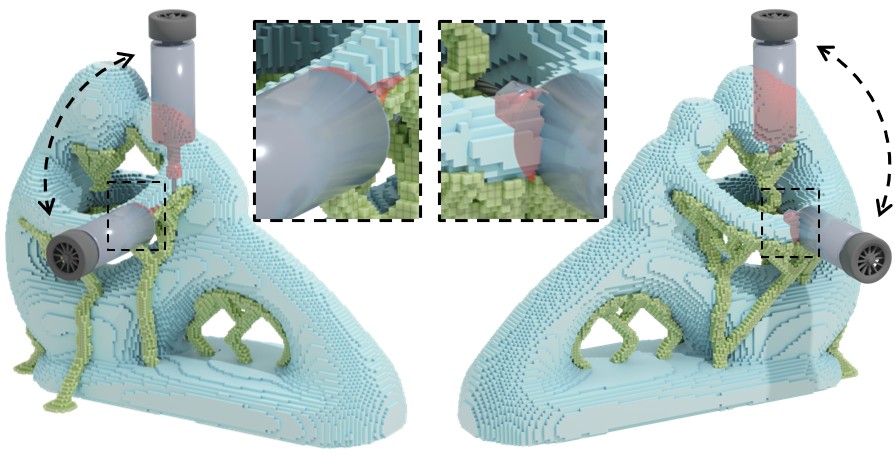}
\put(-218,3){\footnotesize \color{black}(Front View)}
\put(-43,3){\footnotesize \color{black}(Back View)}
\caption{Collision occurs (with interference regions highlighted in red) when using a 10mm SM tool to remove the AM support generated by Adobe MeshMixer for the Fertility model -- see the illustration in different views.
}\label{fig:fertility_AMSM}
\end{figure}

\subsubsection{Fabrication and verification}
We physically fabricated several of the previously shown examples using our hybrid machine, employing PETG filament for the AM operations as the material is resistant to high temperatures and has better milling performance. The fabrication result of the GE-Bracket has been presented in Fig.~\ref{fig:teaser}, demonstrating the effectiveness of our approach. For comparison, we also fabricated the same model using only AM operations, augmented with removable support structures generated in the pre-processing step. 
To ensure a fair comparison, the pre-added support structures are removed using SM operations after the printing-only stage. As shown in the 3D scanning results, the AM-only fabrication fails in regions with large overhangs, whereas the interleaved AM and SM operations in our method produce successful outcomes. 
A similar comparison is provided for the Fertility model, where geometric deviations are visualized using color maps derived from 3D scans (see Fig.~\ref{fig:FertilityPhysicalResults}). As illustrated in Fig.~\ref{fig:fertility_AMSM}, the supports added by a commercial software -- Adobe MeshMixer cannot be reached by our SM cutter with the tool length as 10mm.

We fabricated both MBB beam models shown in Fig.~\ref{fig:MBBComp}: the first was produced following the HM sequence generated by our method, while the second was directly 3D printed. The fabricated models are shown in Fig.~\ref{fig:MBB_results}. To further verify structural stiffness, both specimens were subjected to 3-point bending tests taken on the INSTRON 3344 1kN tensile testing machine. As shown in Fig.~\ref{fig:MBB_results}(b), the MBB beam computed by TO without the self-support constraint and fabricated via our HM sequence sustained a 30.51\% improvement of stiffness.
Additionally, we also fabricated a TPMS model with complex topology with its result shown on the right side of Fig.~\ref{fig:TPMSResults}. While our algorithm is demonstrated on a polymer-based hybrid machine, it is general and applicable across a wide range of materials and hardware setups.

\subsection{Discussion}
While our algorithm focuses on ensuring the manufacturability of models with arbitrary shapes, it only implicitly encourages continuous AM operations through the nullification algorithm and the post-processing stage of toolpath generation. Minimizing the total number of tool switches is not a primary objective in our sequence planning. Although this is generally acceptable for machines equipped with automatic tool-changing capabilities, frequent switching may introduce mechanical errors, potentially compromising the precision of the fabricated model in hybrid manufacturing. Such errors can be mitigated through precise calibration and the use of a high-stiffness switching mechanism, but they cannot be entirely eliminated.

\begin{figure}
\centering
    \includegraphics[width=\linewidth]{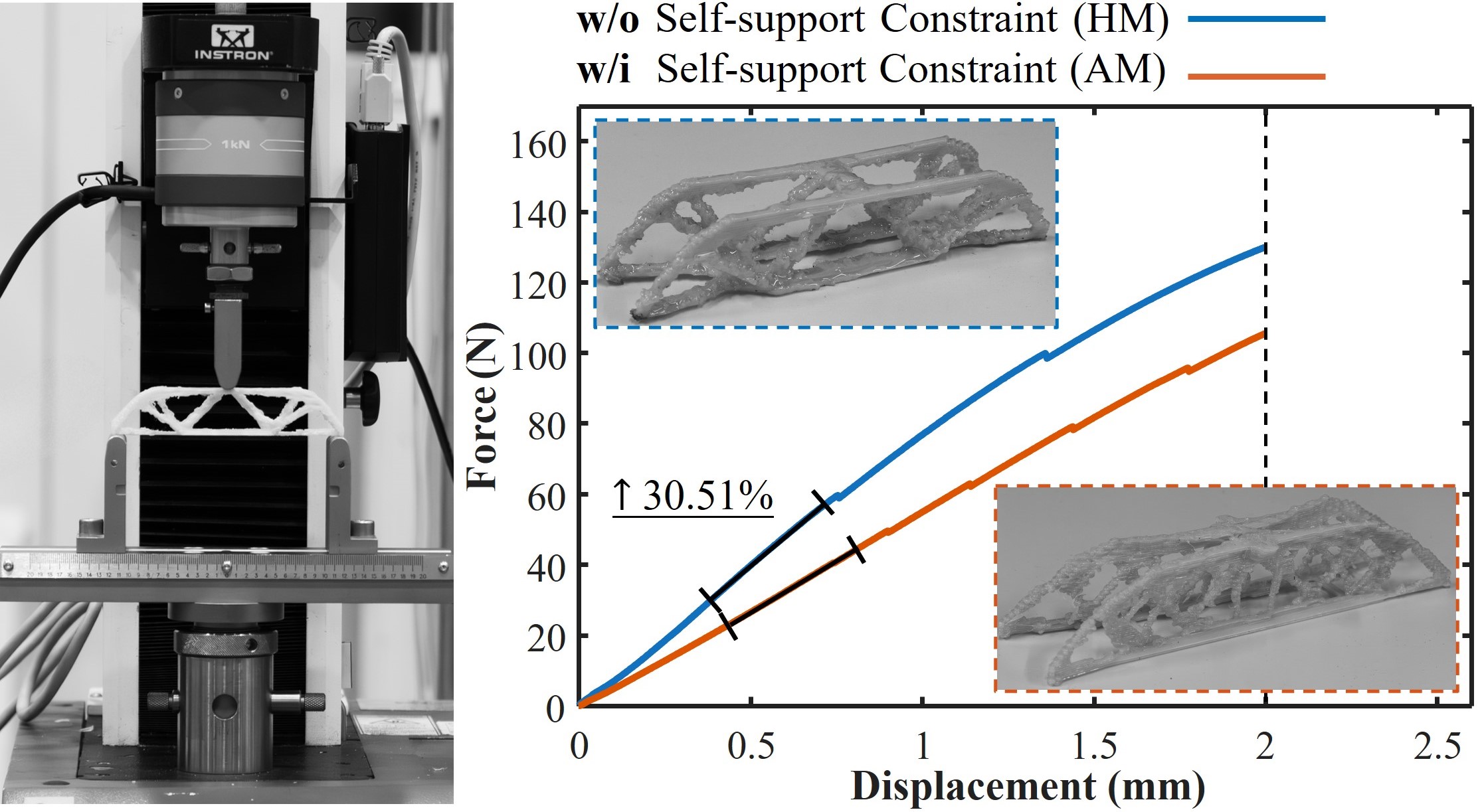}
    %\put(-240,85){\footnotesize \color{black}(a)}
    %\put(-240,3){\footnotesize \color{black}(b)}
    \put(-242,3){\footnotesize \color{black}(a)}
    \put(-165,3){\footnotesize \color{black}(b)}
    %\put(-83,78){\footnotesize \color{black}(c)}
    %\put(-83,23){\footnotesize \color{black}(d)}
    \caption{Three-point bending tests (a) taken on the physical specimens of MBB beams determined by TO with (fabricated by AM with weight: 8.2g) vs. without self-support constraints (made by our HM with weight: 8.1g). (b) The force–displacement curves show 30.51\% larger stiffness on the MBB beam made by our HM method. Note that the mechanical tests are limited to a displacement range of 0–2mm, as this range better matches the loading conditions used in the topology optimization.
    }\label{fig:MBB_results}
\end{figure}

One limitation of our approach is that it does not account for material deformation caused by gravity, as the definition of structural stability in this paper assumes the use of highly stiff materials during hybrid manufacturing. However, in practice, structures under fabrication may experience gravitational deformation -- an issue addressed in prior work on additive manufacturing \cite{Huang2024SIGA, Wang2020SMO, Stava2012SIG}. Incorporating such deformation effects into our stability analysis is an important future research.

Another limitation of our current approach is that it does not explicitly address the removal of SM chips within closed cavities -- i.e., geometries that have been explored in previous work such as~\cite{Romain2013Tog} to design self-balanced models suitable for 3D printing. While such cavities can sometimes be optimized to be self-supporting~\cite{Wang2018TVCG}, the issue of trapped material remains a practical concern in hybrid manufacturing. In our fabrication experiments, we apply a practical engineering approach -- a high-pressure air blower is used to forcibly remove residual material before AM operation closing the cavity. While this approach is effective in many cases, it is not guaranteed to work for all geometries -- especially those with deep or highly tortuous internal voids, where the blowing method may not be able to remove the chips of SM operations completely. We consider this as an area for future improvement in HM process planning.

When claiming that any shape can be fabricated, we disregard the orientation of material properties, as this work considers only isotropic materials. However, for complex composites with strong anisotropic behavior (e.g.,~\cite{Liu2025SIG}), integrated optimization of both design and manufacturing objectives is still required. This, however, is beyond the scope of this paper.

Lastly, for ease of implementation, our physical fabrication was carried out on a machine equipped with a spindle of relatively low stiffness, which limited our experiments to polymer materials. Extending the application of our HM process planning algorithm to metal-based HM systems presents an exciting avenue for future work. Furthermore, replacing the CNC tool in our setup with a laser cutter could offer the advantage of processing very deep or hard-to-reach cavities. Nevertheless, the practical use of laser cutting in such contexts would still need to account for limitations such as beam divergence, focal depth, and material-specific ablation characteristics.

\begin{figure*}
\centering
\includegraphics[width=\linewidth]{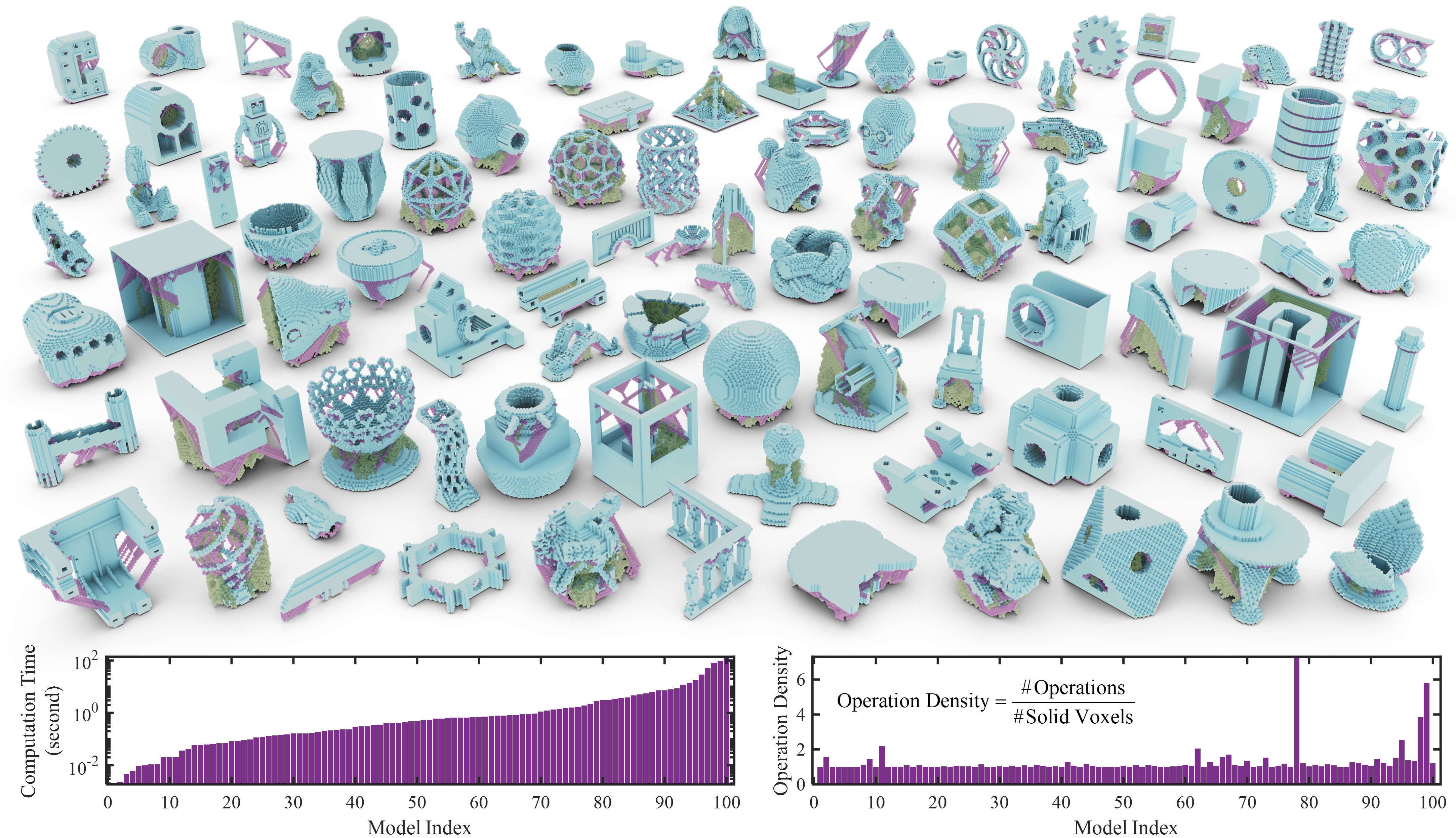}\\
%\vspace{-5pt}
\caption{Results of our method on 100 models randomly selected from the Thingi10K dataset. Support voxels added during the nullification step are shown in green, while those added during the pre-processing step are shown in purple. Computational time and operation density across all 100 models are also reported, where operation density denotes the ratio of total operations to the number of solid voxels.
}\label{fig:Thingi10K}
\end{figure*}

\section{Conclusion}
In this paper, we propose a nullification algorithm to compute interleaved AM and SM operations for fabricating models of arbitrary shapes. The algorithm is built upon two fundamental operations -- accretion and erosion -- which are the inverse of AM and SM operations respectively. During the nullification process, these operations are applied in a coordinated manner to gradually reduce the model to an empty set as null, while ensuring both manufacturability and the structural stability of all intermediate shapes. We provide a theoretical proof that the nullification process is always completable for any input model represented as a voxel set. Furthermore, we develop a scalable implementation that is able to handle models with up to 0.98M solid voxels. As the first approach to theoretically guarantee the feasibility of hybrid manufacturing for arbitrary geometries, our algorithm has been validated both computationally by a variety of models and experimentally through physical fabrication.

\begin{acks}
This project was supported by the Chair Professorship Fund at the University of Manchester and the UK Engineering and Physical Sciences Research Council (EPSRC) Fellowship Grant (Ref: EP/X032213/1). The authors also gratefully acknowledge the technical support provided by 5AXISWORKS Ltd.
\end{acks}

\bibliographystyle{ACM-Reference-Format}
\bibliography{reference}
\end{document}